\newtheorem{theorem}{Theorem}
\newtheorem{lem}{Lemma}
\newtheorem{remark}{Remark}
\DeclareMathOperator{\sinc}{sinc}
\newcommand{\underbracee}[2]{\underbrace{#2}_{#1}}
\acrodef{PI}[PI]{Principal Investigator}
\acrodef{KI}[KI]{Key Investigator}
\acrodef{RO}[RO]{research objective}
\acrodef{RA}[RA]{research activity}
\acrodef{AFOSR}[AFOSR]{Air Force Office of Scientific Research}
\acrodef{ARO}[ARO]{Army Research Office}
\acrodef{DARPA}[DARPA]{Defense Advanced Research Projects Agency}
\acrodef{DOD}[DoD]{Department of Defense}
\acrodef{ONR}[ONR]{Office of Naval Research}
\acrodef{ISR}[ISR]{intelligence, surveillance and reconnaissance}
\acrodef{SA}[SA]{situational awareness}
\acrodef{MDMP}[MDMP]{military decision making process}
\acrodef{COA}[COA]{course of action}
\acrodef{OODA}[OODA]{observe, orient, decide, and act}
\acrodef{AA}[AeroAstro]{Aeronautics and Astronautics}
\acrodef{COC}[COC]{College of Computing}
\acrodef{MIT}[MIT]{Massachusetts Institute of Technology}
\acrodef{IDSS}[IDSS]{Institute for Data, Systems, and Society}
\acrodef{ISN}[ISN]{Institute for Soldier Nanotechnologies}
\acrodef{LIDS}[LIDS]{Laboratory for Information and Decisions Systems}
\acrodef{WINSLAB}[WINSLab]{Wireless Information and Network Sciences Laboratory}
\acrodef{IIT}[IIT]{Indian Institute of Technology}
\acrodef{UA}[UA]{University of Arizona}
\acrodef{LIDAR}[LIDAR]{light detection and ranging}
\acrodef{NSTC}[NSTC]{National Science \& Technology Council}
\acrodef{NLA}[NLA]{noiseless linear amplification}
\acrodef{QIS}[QIS]{quantum information science}
\acrodef{QN}[QN]{quantum network}
\acrodef{QNS}[QNS]{quantum network science}
\acrodef{QSD}[QSD]{Quantum State Discrimination}
\acrodef{RADAR}[RADAR]{radio detection and ranging}
\acrodef{ACK}[ACK]{acknowledge}
\acrodef{AE}[AE]{angle estimate}
\acrodef{AI}[AI]{angle information}
\acrodef{AII}[AII]{angle information intensity}
\acrodef{AIS}[AIS]{automatic identification system}
\acrodef{AL}[AL]{angle likelihood}
\acrodef{ALU}[ALU]{arithmetic logic unit}
\acrodef{ANN}[ANN]{artificial neural network}
\acrodef{AOA}[AoA]{angle-of-arrival}
\acrodef{API}[API]{application programming interface}
\acrodef{ASIC}[ASIC]{application-specific integrated circuit}
\acrodef{AWGN}[AWGN]{additive white Gaussian noise}
\acrodef{AWS}[AWS]{Amazon Web Services}
\acrodef{BC}[BC]{belief condensation}
\acrodef{BCF}[BCF]{belief condensation filter}
\acrodef{BLE}[BLE]{Bluetooth Low Energy}
\acrodef{BP}[BP]{belief propagation}
\acrodef{BPZF}[BPZF]{band-pass zonal filter}
\acrodef{BTB}[BTB]{Bellini-Tartara bound}
\acrodef{CCDF}[CCDF]{complementary cumulative distribution function}
\acrodef{CDF}[CDF]{cumulative distribution function}
\acrodef{CE}[CE]{cross-entropy}
\acrodef{CEO}[CEO]{counting error outage}
\acrodef{CESS}[CE-SS]{cross-entropy SS}
\acrodef{CF}[CF]{characteristic function}
\acrodef{CFR}[CFR]{channel frequency response}
\acrodef{CIR}[CIR]{channel impulse response}
\acrodef{CNS}[CNS]{classical network science}
\acrodef{CR}[CR]{channel response}
\acrodef{CRB}[CRB]{Cram\'{e}r-Rao bound}
\acrodef{CRLB}[CRLB]{Cram\'{e}r-Rao lower bound}
\acrodef{CSI}[CSI]{channel state information}
\acrodef{CV}[CV]{continuous variable}
\acrodef{DA}[DA]{data-association}
\acrodef{DE}[DE]{distance estimate}
\acrodef{DPOMDP}[Dec-POMDP]{cecentralized-partially observed Markov decision process}
\acrodef{DFE}[DFE]{decision feedback equalizer}
\acrodef{DP}[DP]{direct path}
\acrodef{DRT}[DRT]{distance ratio test}
\acrodef{DV}[DV]{discrete variable}
\acrodef{EBIT}[EBIT]{entanglement bit}
\acrodef{ED}[ED]{energy detector}
\acrodef{EED}[EED]{Engineering Every Day}
\acrodef{EKF}[EKF]{extended Kalman filter}
\acrodef{EIRP}[EIRP]{equivalent isotropically radiated power}
\acrodef{ESD}[ESD]{energy-based soft-decision}
\acrodef{ESPRIT}[ESPRIT]{estimation of signal parameters via rotational invariant techniques}
\acrodef{FCC}[FCC]{Federal Communications Commission}
\acrodef{FG}[FG]{factor graph}
\acrodef{FII}[FII]{Fisher information inequality}
\acrodef{FIM}[FIM]{Fisher information matrix}
\acrodef{FL}[FL]{feature likelihood}
\acrodef{FP}[FP]{feature potential}
\acrodef{FSC}[FSC]{finite state controller}
\acrodef{FW}[FW]{Fisher-Wald}
\acrodef{FY}[FY]{fiscal year}
\acrodef{GDOP}[GDOP]{geometric dilution of precision}
\acrodef{GLMB}[GLMB]{generalized labeled multi-Bernoulli}
\acrodef{GLRT}[GLRT]{generalized likelihood ratio test}
\acrodef{GMMLMB}[G-MM-LMB]{Gaussian MM-LMB}
\acrodef{GNSS}[GNSS]{global navigation satellite system}
\acrodef{GPS}[GPS]{Global Positioning System}
\acrodef{HCA}[HCA]{heterogeneous computing architecture}
\acrodef{HDSA}[HDSA]{high-definition situation-aware}
\acrodef{HDP}[HDP]{hierarchical Dirichlet process}
\acrodef{HI}[HI]{hard information}
\acrodef{HMM}[HMM]{hidden Markov model}
\acrodef{IID}[IID]{independent, identically distributed}
\acrodef{IMU}[IMU]{inertial measurement unit}
\acrodef{INDFT}[INDFT]{inverse non-uniform discrete Fourier transform}
\acrodef{INR}[INR]{interference-to-noise ratio}
\acrodef{IOT}[IoT]{Internet-of-Things}
\acrodef{IR-UWB}[IR-UWB]{impulse radio UWB}
\acrodef{IRPR}[iRPR]{infinite regionalized policy representation}
\acrodef{JBSF}[JBSF]{jump back and search forward}
\acrodef{KDE}[KDE]{kernel density estimation}
\acrodef{KF}[KF]{Kalman filter}
\acrodef{KL}[KL]{Kullback-Leibler}
\acrodef{LBP}[LBP]{loopy belief propagation}
\acrodef{LEM}[LEM]{Laplacian eigen-map}
\acrodef{LEO}[LEO]{localization error outage}
\acrodef{LMB}[LMB]{labeled multi-Bernoulli}
\acrodef{LMS}[LMS]{least means square}
\acrodef{LOCC}[LOCC]{local operations and classical communication}
\acrodef{LOS}[LOS]{line-of-sight}
\acrodef{LOT}[LoT]{Localization-of-Things}
\acrodef{LRT}[LRT]{likelihood ratio test}
\acrodef{LRFS}[LRFS]{Labeled random finite set}
\acrodef{LS}[LS]{least squares}
\acrodef{LSE}[LSE]{line spectral estimation}
\acrodef{MAC}[MAC]{medium access control}
\acrodef{MAP}[MAP]{maximum a posteriori}
\acrodef{MBS}[MBS]{maximum bin search}
\acrodef{MC}[MC]{Monte Carlo}
\acrodef{MDD}[MDD]{minimum distance distribution}
\acrodef{MDP}[MDP]{Markov decision process}
\acrodef{MF}[MF]{matched filter}
\acrodef{MHT}[MHT]{multi-hypothesis tracking}
\acrodef{MIMO}[MIMO]{multiple-input multiple-output}
\acrodef{ML}[ML]{maximum likelihood}
\acrodef{MLE}[MLE]{maximum likelihood estimation}
\acrodef{MMSE}[MMSE]{minimum-mean-square-error}
\acrodef{MMLMB}[MM-LMB]{merged-measurement LMB}
\acrodef{MOT}[MOT]{multi-object tracking}
\acrodef{MOU}[MOU]{measurement origin uncertainty}
\acrodef{MP}[MP]{map potential}
\acrodef{MSE}[MSE]{mean-square error}
\acrodef{MUI}[MUI]{multi-user interference}
\acrodef{MUSIC}[MUSIC]{multiple signal classification}
\acrodef{NBI}[NBI]{narrowband interference}
\acrodef{NISQ}[NISQ]{Noisy Intermediate-Scale Quantum}
\acrodef{NLN}[NLN]{network localization and navigation}
\acrodef{NLOS}[NLOS]{non-line-of-sight}
\acrodef{NSF}[NSF]{National Science Foundation}
\acrodef{OOT}[OoT]{ocean-of-things}
\acrodef{OOTCESS}[OoT-CESS]{OoT-CESS}
\acrodef{OP}[OP]{outage probability}
\acrodef{OSPA}[OSPA]{optimum subpattern assignment}
\acrodef{P-Max}[P-Max]{$P$-Max}  %suggestion, use with \acl{P-Max}
\acrodef{PAR}[PAR]{probabilistic association rule}
\acrodef{PCA}[PCA]{principal component analysis}
\acrodef{PDF}[PDF]{probability distribution function}
\acrodef{PDP}[PDP]{power delay profile}
\acrodef{PEB}[PEB]{position error bound}
\acrodef{PF}[PF]{physical features}
\acrodef{PHD}[PHD]{probability hypothesis density}
\acrodef{PMF}[PMF]{probability mass function}
\acrodef{POCS}[POCS]{projection onto convex sets}
\acrodef{POMDP}[POMDP]{partially observed Markov decision process}
\acrodef{PPM}[PPM]{pulse position modulation}
\acrodef{PPP}[PPP]{Poisson point process}
\acrodef{QUA}[QuaDRiGa]{QUAsi Deterministic RadIo channel GenerAtor}
\acrodef{RCS}[RCS]{radar cross section}
\acrodef{RF}[RF]{radiofrequency}
\acrodef{RFID}[RFID]{radio frequency identification}
\acrodef{RFS}[RFS]{random finite set}
\acrodef{RI}[RI]{range information}
\acrodef{RII}[RII]{range information intensity}
\acrodef{RL}[RL]{range likelihood}
\acrodef{RLS}[R-LS]{range-based least squares}
\acrodef{RLMC}[RL-MC]{reinforcement learning Monte Carlo}
\acrodef{RM}[RM]{resource management}
\acrodef{RMS}[RMS]{root mean square}
\acrodef{RMSE}[RMSE]{root-mean-square error}
\acrodef{ROI}[ROI]{region of interest}
\acrodef{RPR}[RPR]{regionalized policy representation}
\acrodef{RRC}[RRC]{root raised cosine}
\acrodef{RSS}[RSS]{received signal strength}
\acrodef{RTT}[RTT]{round-trip time}
\acrodef{RV}[RV]{random variable}
\acrodef{SBS}[SBS]{serial backward search}
\acrodef{SBSMC}[SBSMC]{serial backward search for multiple clusters}
\acrodef{SC}[SC]{soft constraint}
\acrodef{SDN}[SDN]{software defined network}
\acrodef{SK}[SK]{soft knowledge}
\acrodef{SI}[SI]{soft information}
\acrodef{SII}[SII]{speed information intensity}
\acrodef{SIR}[SIR]{signal-to-interference ratio}
\acrodef{SLAM}[SLAM]{simultaneous localization and mapping}
\acrodef{SMC}[SMC]{sequential monte carlo}
\acrodef{SNR}[SNR]{signal-to-noise ratio}
\acrodef{SO}[SO]{soft observation}
\acrodef{SPAWN}[SPAWN]{sum-product algorithm over a wireless network}
\acrodef{SPEB}[SPEB]{squared position error bound}
\acrodef{SR}[SR]{sensor radar}
\acrodef{SS}[SS]{sensor selection}
\acrodef{SSCH}[SSh]{sensor scheduling} % not sure
\acrodef{STEM}[STEM]{science, technology, engineering, and mathematics}
\acrodef{SVE}[SVE]{single-value estimate}
\acrodef{TBD}[TBD]{track before detect}
\acrodef{TCS}[TCS]{threshold crossing search}
\acrodef{TD}[TD]{temporal difference}
\acrodef{TDOA}[TDOA]{time difference-of-arrival}
\acrodef{TH}[TH]{time-hopping}
\acrodef{TNR}[TNR]{threshold-to-noise ratio}
\acrodef{TOA}[TOA]{time-of-arrival}
\acrodef{TOF}[TOF]{time-of-flight}
\acrodef{TSD}[TSD]{threshold-based soft-decision}
\acrodef{TWS}[TWS]{track while scan}
\acrodef{UAV}[UAV]{unmanned aerial vehicles}
\acrodef{UKF}[UKF]{unscendent Kalman filter}
\acrodef{UML}[UML]{unsupervised machine learning}
\acrodef{UWB}[UWB]{ultra-wideband}
\acrodef{VM}[VM]{von Mises}
\acrodef{VNA}[VNA]{vector network analyzer}
\acrodef{WAF}[WAF]{wall attenuation factor}
\acrodef{WBI}[WBI]{wideband interference}
\acrodef{WED}[WED]{wall extra delay}
\acrodef{WLS}[WLS]{weighted least squares}
\acrodef{WPAN}[WPAN]{wireless personal area network}
\acrodef{WSN}[WSN]{wireless sensor network}
\acrodef{WWB}[WWB]{Weiss-Weinstein bound}
\acrodef{XGL}[xGL]{next generation localization}
\acrodef{ZZB}[ZZB]{Ziv-Zakai bound}
\acrodef{ZZLB}[ZZLB]{Ziv-Zakai lower bound}
\acrodef{SM}[SM]{spatial modulation}
\acrodef{MD}[MD]{mobile device}
\acrodef{BS}[BS]{base station}
\acrodef{LoS}[LoS]{Line-of-Sight}
\acrodef{NLoS}[NLoS]{Non-Line-of-Sight}
\acrodef{mmWave}[mmWave]{millimeter-wave}
\acrodef{NOMA}[NOMA]{Non-Orthogonal Multiple Access}
\acrodef{5G}[5G]{fifth generation}
\acrodef{ToA}[ToA]{time-of-arrival}
\acrodef{TDoA}[TDoA]{time-difference-of-arrival}
\acrodef{DoA}[DoA]{direction-of-arrival}
\acrodef{OFDM}[OFDM]{Orthogonal-Frequency-Division-Multiplexing}
\acrodef{DAS}[DAS]{distributed antenna system}
\acrodef{RRU}[RRU]{remote radio unit}
\acrodef{2D}[2D]{two-dimensional}
\acrodef{ML}[ML]{maximum-likelihood}
\acrodef{UAV}[UAV]{Unmanned-Aerial-Vehicle}
\newcommand{\circledone}{\raisebox{.5pt}{\textcircled{\raisebox{-.9pt} {1}}}}
\newcommand{\circledtwo}{\raisebox{.5pt}{\textcircled{\raisebox{-.9pt} {2}}}}
\newcommand{\circledthree}{\raisebox{.5pt}{\textcircled{\raisebox{-.9pt} {3}}}}
\newcommand{\circledfour}{\raisebox{.5pt}{\textcircled{\raisebox{-.9pt} {4}}}}
\newcommand{\paperTitle}{On the Resilience of Direction-Shift Keying Against Phase Noise and Short Channel Coherence Time at mmWave Frequencies}
\begin{document}

%---------------

\title{\paperTitle}

% Uncomment this line, if it's an invited paper
% \IEEEspecialpapernotice{(Invited Paper)}

% author names, IEEE memberships, corresponding address, title footnote %
\author{	
  Mohaned Chraiti, Ozgur Ercetin, Ali Ghrayeb, and Ali Gorcin

 %\thanks{The fundamental research described in this paper was supported in part by...(\textit{Corresponding author: M.~Chraiti.})
%}
\thanks{
M. Chraiti and O. Ercetin are with the Electronics Engineering Department, Sabanci University, Istanbul, Turkey (e-mails: \texttt{mohaned.chraiti@sabanciuniver.edu} and \texttt{oercetin@sabanciuniv.edu}.)\\ 
A. Ghrayeb is with College of Science and Engineering, Hamad Bin Khalifa University, Doha, Qatar (
Email: \texttt{aghrayeb@hbku.edu.qa}). \\
A. Gorcin is with Communications and Signal Processing Research (HİSAR) Lab., T{\"{U}}B{\.{I}}TAK B{\.{I}}LGEM, Kocaeli, Turkey and Department of Electronics and Communication Engineering, Istanbul Technical University, {\.{I}}stanbul, Turkey (Email: \texttt{aligorcin@itu.edu.tr})
}
%\thanks{
%Color versions of one or more of the figures in this paper are available online at http://ieeexplore.ieee.org.
%}
%\thanks{
%Digital Object Identifier xx.xxxx/xxx.xxxx.xxxxxxx
%}		
	
}

% make the title area
% Don't write page number 0 to the cover page.
\maketitle

%\markboth{}{}
%\markboth{IEEE JOURNAL ON SELECTED AREAS IN COMMUNICATIONS, VOL.~26, NO.~1, JANUARY 2008}{}

%\markboth{PLEASE DO NOT DISTRIBUTE WITHOUT THE WRITTEN CONSENT OF THE AUTHORS \versionbox}
%		{Suwansantisuk, Chiani, and Win: \paperTitleMarkboth}
\markboth{\qquad\qquad\qquad\qquad\qquad\qquad\qquad\qquad\qquad\qquad\qquad\qquad\qquad\qquad\qquad\qquad\qquad\qquad\qquad\qquad\qquad\qquad\qquad\qquad}% \version}
		{ \paperTitleMarkboth}

%%%%%%%% uncomment this section for a 2-column formt %%%%%%%
%%%%%%%% [begin] %%%%%%%%
%\thispagestyle{empty}
%\textcolor{blue}{\framebox{\textsf{Today: \today}}}

%\newpage
%%%%%%%% [end] %%%%%%%%

\setcounter{page}{1}

%---------------------------------------------------------------------------%
%              abstract and key words             %
%---------------------------------------------------------------------------%
\begin{abstract}

%Option 1: This work establishes the first analytical foundation for Direction-Shift Keying (DSK), a novel class of Spatial Modulation (SM) designed to overcome two critical impairments in millimeter-Wave (mmWave) and sub-Terahertz communications: phase noise and rapid channel variation. Unlike conventional SM schemes that encode information in the amplitude or phase of channel coefficients, DSK utilizes a Distributed Antenna System (DAS) and encodes symbols in the Direction-of-Arrival (DoA). We derive the optimal detector for the general case of a mobile receiver equipped with $M$ antennas. Furthermore, we provide a closed-form expression for the Direction Coherence Time (DCT)—the duration over which the DoA remains stable, enabling coherent DSK communication. Our analysis reveals that DCT scales with $d/v$ (antenna spacing over velocity), whereas the conventional Channel Coherence Time (CCT) scales with $\lambda/v$ (wavelength over velocity), implying a coherence time gain on the order of $d/\lambda$, or up to three orders of magnitude. Additionally, we prove that DSK inherently mitigates phase noise, eliminating the need for compensation mechanisms. These theoretical findings are validated through extensive simulations, demonstrating the robustness of DSK under practical mobility and noise conditions.

The rapid variation of the wireless channel (short channel coherence time) and the phase noise are two prominent concerns in Millimeter-wave (mmWave) and sub-Terahertz systems communication systems. Equalizing the channel effect and tracking the phase noise necessitate dense pilot insertion. Direction-Shift Keying (DSK), a recent variant of Spatial Modulation (SM), addresses these challenges by encoding information in the Direction-of-Arrival (DoA) using a distributed antenna system (DAS), rather than relying on amplitude or phase. DSK has been shown to extend coherence time by up to four orders of magnitude. Despite its promise, existing DSK studies are largely simulation-based and limited to simplified roadside unit scenarios and mobile device (MD) equipped with only two antennas. DSK's performance in general settings, along with the fundamental laws governing its behavior, such as coherence time and resilience to phase noise, remain open problems. In this paper, we derive the structure of the optimal detector for the case of $M$-antenna MD. Then, we establish the governing law for DSK's coherence time, termed the Direction Coherence Time (DCT), defining the the temporal duration over which the DoA remains approximately invariant. We analytically establish that DCT scales with $d/v$ (transmitter-receiver distance over velocity), while the Channel Coherence Time (CCT) scales with $\lambda/v$, revealing a coherence time gain on the order of $d/\lambda$ (equivalent to more than four orders of magnitude.) Furthermore, we prove that DSK inherently cancels the phase noise, requiring no additional compensation. Analytical predictions are validated through simulations, confirming the robustness and scalability of DSK in high-frequency mobile environments.

%This work serves as an analytical anchor and a characterization of DSK performance in mobile and phase noise settings. We derive the optimal detector for a mobile device equipped with $M$ antennas and an expression for Direction Coherence Time (DCT), i.e., the temporal window over which DoA remains invariant. We analytically establish that DCT scales with $d/v$ (antenna separation over velocity), while Channel Coherence Time (CCT) scales with $\lambda/v$, revealing a coherence time gain on the order of $d/\lambda$, or three orders of magnitude. Furthermore, we demonstrate that DSK inherently cancels phase noise, requiring no additional compensation. Analytical predictions are validated through simulations, confirming the robustness and scalability of DSK in high-frequency mobile environments.% Our analytical findings are verified by extensive simulations, confirming the derived scaling behavior of DCT with respect to mobility and phase noise.   

\end{abstract}
%The overhead multiplies by a factor of thirty when the carrier frequency shifts from $1$\,GHz to $30$\,GHz.
\begin{IEEEkeywords}
 Channel coherence time, distributed antennas system, localization,  location-shift keying, mmWave communications, phase noise, space-shift keying.
\end{IEEEkeywords}

%\resetall		% reset the acronyms

%---------------------------------------------------------------------------%
%                Introduction                %
%---------------------------------------------------------------------------%

\section{Introduction}\label{Sec:Introduction}
\subsection{Motivation}
Millimeter-wave (mmWave) and higher frequencies offer the advantage of multi-gigahertz bandwidth, creating a potential to enhance system capacity to meet the rapidly increasing data rate demands and growing number of connected devices. However, fully reaping the benefits of those higher frequency bands requires addressing several critical challenges beyond sensitivity to blockage and high path loss. In particular, rapid channel variations (i.e., short channel coherence time) and oscillator phase noise have emerged as two prominent issues \cite{HexaX,HexaX1,PhaseNoise3GPP,DuVal,ChiVenVal:3,ChiVenVal:1,VaChoiHeath}. The Channel Coherence Time (CCT) is a physical property of the channel and is proportional to its wavelength \cite{DuVal,ChiVenVal:3,ChiVenVal:1,VaChoiHeath}. The communication channel undergoes independent and random fluctuations in each CCT, which occur when a mobile device traverses a distance proportional to a fraction of the wavelength, typically by less than a quarter of the wavelength. Thus, channel equalization requires dense pilot symbols insertion. To electorate, 
consider a vehicle traveling at $v = 100\,\text{km/h}$ using a mmWave carrier frequency of $30\,\text{GHz}$. The corresponding channel coherence time is approximately equal to $
t_{\text{CCT}} = \frac{9}{16\pi}\frac{3 \times 10^8 \cdot 3.6}{30 \times 10^9 \cdot 100} \approx 0.64 \times 10^{-4}\,\text{s}
$ \cite[Sec. II-A, (2)]{TelTse:00}.
This coherence time is about one-tenth of a typical OFDM slot duration ($0.5 \times 10^{-3}\,\text{s}$), making it even shorter than a single OFDM symbol.

% consider a vehicle moving at high speed \( v = 100\,\text{km/h} \), transmitting and receiving at a \(30\,\text{GHz}\) mmWave carrier frequency. The corresponding channel coherence time is approximately equal to
% $
% t_{\text{CCT}} = \frac{9}{16\pi} \frac{3 \times 10^8 \cdot 3.6}{30 \times 10^9 \cdot 100} \approx 0.64 \times 10^{-4}\,\text{seconds}
% $ \cite[Sec. II-A, (2)]{TelTse:00}.
% To better interpret this value, we compare it with the duration of an Orthogonal Frequency Division Multiple Access (OFDM) time slot, which is approximately \( 0.5 \times 10^{-3} \,\text{seconds} \). The channel coherence time is thus about one-tenth of the OFDM slot duration and is even shorter than a single OFDM symbol.

Phase noise arises from non-ideal local oscillators and induces unpredictable constellation rotation. The 3GPP provided a phase noise model in \cite[Sec. Section 4.2.3]{PhaseNoise3GPP}, demonstrating that it increases with carrier frequency, reaching levels up to two orders of magnitude higher than those at sub-6GHz \cite{PhaseNoise:ChungPra,PhaseNoise3GPP}. Empirical results (3GPP Release 17) show that doubling the frequency incurs ~6\,dB Signal-to-Noise (SNR) loss, while a 16x increase (e.g., 9–150\,GHz) leads to up to 24\,dB degradation \cite{PhaseNoise:Song, MohanedDidem}. Solutions, such as hardware enhancements, phase tracking, and pilot insertion, introduce complexity and consume spectral resources. Despite these efforts, there are persistent difficulties in achieving reliable mmWave and sub-terahertz band transmissions, as discussed in \cite[Sec. 6.2.]{HexaX}.

\subsection{Literature Review}
\subsubsection{Channel coherence time}
Several studies quantified the impact of channel training overhead on the spectral efficiency at mmWave frequencies \cite{DuVal,ChiVenVal:3,ChiVenVal:1,VaChoiHeath}. In \cite{DuVal}, the authors illustrated that pilot symbols can occupy a significant portion of the bandwidth, potentially accounting for over one-third of it. Moreover, they showed that the maximum beneficial bandwidth can be as narrow as 100\,MHz, despite the GHz-bandwidth that is available at mmWave frequencies. The channel estimation cost/quality trade-off was investigated in \cite{ChiVenVal:3,ChiVenVal:1}. The authors quantified the channel training overhead as a function of the channel quality and the number of antennas. The study showed that the overhead increases drastically as the channel quality decreases.

\par In \cite{VaChoiHeath, ZhaZhaZha, MohanedOzgur}, the authors demonstrated that the beamforming coherence time (during which the beam direction remains nearly constant) can exceed the CCT. While such an approach reduces the beamforming overhead, the channel coefficient after beamforming must still be estimated at the receiver for each CCT to equalize the channel effect.

\subsubsection{Phase noise}

Phase noise, caused by the imperfections of the local oscillators and timing jitters, adds an extra layer of phase distortion. Its impact is broadly recognized by hardware designers as one of the foremost challenges at high frequencies. Given the magnitude of the impact of phase noise on communication reliability, extensive research efforts, including the Hexa-X initiative led by major wireless industry stakeholders and transceiver manufacturers, have focused on characterizing its impact \cite{HexaX, HexaX1, MohanedDidem}. The 3GPP provided a phase noise model in \cite[Sec. Section 4.2.3]{PhaseNoise3GPP}, demonstrating that it increases with carrier frequency, reaching levels up to two orders of magnitude higher than those at sub-6GHz \cite{PhaseNoise:ChungPra,PhaseNoise3GPP}. Studies on 5G devices (3GPP Release 17) revealed that a $2\times$ frequency multiplier leads to approximately 6dB SNR degradation, while the transition from 9GHz to 150GHz involves a $16\times$ multiplication, resulting in SNR degradation of up to 24dB \cite{PhaseNoise:Song,MohanedNFreqMulti}. Techniques such as hardware enhancement, phase error tracking, and transmitting pilot symbols were employed to reduce and compensate for constellation rotation caused by phase noise. However, these solutions often lead to increased device complexity/cost and reduced spectral efficiency due to reference signal overhead. Additionally, there are persistent difficulties in achieving reliable mmWave and sub-terahertz band transmissions, as discussed in \cite[Sec. 6.2.]{HexaX}. Besides, the phase noise diminishes the observed coherence time of the channel, necessitating higher signaling rates and increasing channel training overhead \cite{HexaX1}.

\subsubsection{Spatial Modulation}
Spatial modulation (SM) is a promising concept for reducing hardware complexity and improving energy efficiency. However, its efficacy depends on the accuracy of the Channel State Information (CSI) estimate at the receiver, making it susceptible to phase noise and the CCT. Conventional amplitude- and phase-based modulation schemes (e.g., QAM) exhibit similar vulnerabilities to those of SM, as both are highly sensitive to rapid channel fluctuations and oscillator-induced phase noise. Compensating for these requires high pilot insertion rates, resulting in overheads that can exceed 30\%  \cite{DuVal,PhaseNoise3GPP}.

Space Shift Keying (SSK) \cite{RenHaaGhr, JegGhrSzc:J1} is a variant of SM and it also requires channel coefficient estimation at the receiver to extract the transmitted bit sequences. In accordance with SSK \cite{RenHaaGhr,JegGhrSzc:J1}, the information is conveyed by the index of the active transmit antenna. Typically, one antenna is active during each symbol period, determined by the transmitted bits. The transmitted bits are reconstructed by comparing the likelihood between the channel coefficient estimated from the received signal and the pre-estimated channel coefficient from each antenna \cite{JegGhrSzc:J2}. This makes the performance susceptible to rapid variation of the channel and phase noise.

To cope with the challenges of rapid channel variations in mmWave systems and the associated high overhead of channel equalization, we introduced a new class of SM in \cite{Mohaned}, namely Direction-Shift-Keying (DSK). Unlike SSK, which performs channel coefficient based detection to identify the active transmit antenna, DSK performs antenna index detection based on the Direction-of-Arrival (DoA) of the received signal. The key principle of DSK is to utilize the multi-antenna Mobile Device (MD) as a spatial discriminator that matches observed signal arrival directions to a set of known reference directions, effectively identifying the active base station antenna. DSK is inherently suited for Distributed Antenna Systems (DAS), where large inter-antenna spacing yields resolvable DoAs. Simulation results reported in \cite{Mohaned} indicate that DoA-based likelihood detection offers increased resilience to rapid channel variations and MD mobility when compared to channel coefficient-based detection.

\par While the concept of Direction-Shift DSK was introduced in \cite{Mohaned} as a means to circumvent the limitations imposed by the rapid variation of the channel, the investigation therein is limited to simulation-based results and lacks theoretical grounding. The performance boundaries of DSK thus remain analytically uncharacterized, precluding an insightful understanding of how its performance scales with key system parameters. Furthermore, the analysis is confined to a narrow setting involving roadside infrastructure and a two-antenna receiver, hindering broader applicability. Crucially, neither the scaling behavior of DSK-related coherence time under mobility nor the scheme’s robustness to oscillator phase noise has been formally established.

%While the concept of DSK was introduced in \cite{Mohaned} to mitigate the limitations imposed by short CCT, the analysis remains simulation-based and lacks theoretical grounding and the limiataion on the perfoamnce of DSK remain unkown. The study is constrained to a narrow scenario involving roadside units and a two-antenna receiver, limiting its generalization. Moreover, the scaling behavior of DSK-associated coherence time (DSK performance under mobility) is not established, and the resilience of DSK to oscillator phase noise remains unexplored.

%While the concept of DSK was introduced in \cite{Mohaned} to address the challenge of short CCT, the results rely primarily on simulations and lack theoretical analysis. Moreover, the study is limited to a specific scenario involving roadside units and two-antenna receiver, which lacks generalization. Furthermore, the scaling law of the DSK coherence time remains  extent to which the technique is resilient to mobility remains unclear, and its robustness to phase noise has not been examined.

\subsection{Contributions}
DSK represents the first class of SM demonstrated to exhibit resilience against rapid channel variations. However, prior work is confined to a narrow configuration, relies exclusively on simulations, and neglects the impact of phase noise. Its full potential in general settings remains unexplored and unverified, lacking theoretical support and rigorous performance quantification. This work generalizes the DSK framework and provides a theoretical analysis of its robustness to both rapid channel variations and phase errors, offering analytical insights that complement the simulation-based results presented in \cite{Mohaned}. Unlike SSK, which relies on CSI, DSK requires a pre-estimate of the DoA for coherent detection. Consequently, it is critical to quantify the Direction Coherence Time (DCT), i.e., the interval during which the DoA remains approximately invariant and viable for detection. While CCT defines the validity period of channel estimates for reliable SSK detection, DCT characterizes the duration over which the DoA remains stable enough to enable DSK detection.

Our contributions in this paper are summarized as follows:

\begin{itemize}
    \item[$\bullet$] Analytical generalization of the DSK transmission and detection framework to arbitrary multi-antenna receivers, beyond the restrictive two-antenna roadside unit scenario.
	\item[$\bullet$] Derivation of an expression for the optimal detector considering $N$ receiving antennas.
	\item[$\bullet$] Theoretical analysis of the DCT, along with its derived expression, demonstrating its superiority over the mmWave CCT by more than three orders of magnitude. We prove that DCT is proportional to $d/v$ (transmitter-receiver distance over the mobile device speed), while CCT is proportional to $\lambda/v$ (the wave length over the speed).
    \item[$\bullet$] Proof that the DSK inherently cancel out the phase noise effect.
    \item Simulation of DSK end-to-end—using explicit pilot-aided channel estimation—and benchmark it against SSK across update time, inter-antenna spacing, and phase-noise levels.
\end{itemize}

The rest of this paper is structured as follows. In Sec. \ref{Sec:SysDes}, we provide the system and transmission models and derive the optimal detector. In Sec. \ref{Sec:PostionCoherenceTime}, we provide the theoretical analysis of the proposed approach. Sec. \ref{Sec:SimulationResults} and Sec. \ref{Sec:Conclusions} are dedicated to simulation results and concluding remarks, respectively.

\section{Direction Shift-Keying}\label{Sec:SysDes}
\subsection{System Model}

We consider a mmWave DAS in which the base station (BS) antennas are spatially sparse.\footnote{The DAS is increasingly considered for mmWave deployments \cite{JDAS1,JDAS2,ORAN_nGRG_2024_mWAD}. Using spatially distributed (i.e., sparse) remote antennas has been shown to improve spectral efficiency, extend coverage, and increase diversity gain. }  A representative configuration is depicted in Fig.~\ref{Fig:SysMod}. The BS is equipped with \( M \) distributed transmit antennas, denoted \( \{\mathrm{T}_1, \mathrm{T}_2, \dotsc, \mathrm{T}_M\} \), each located at position with coordinates \( \mathbf{b}_m = [b_{\mathrm{x},m}, b_{\mathrm{y},m}] \), where \( m \in \{1, 2, \dotsc, M\} \). The MD is equipped with \( N \geq 2 \) antennas.
We consider the case of a MD in motion. At a given instant, the position of the device is defined by the coordinates of its geometric center, denoted as $\bar{\mathbf{a}} = [\bar{a}_{\mathrm{x}}, \bar{a}_{\mathrm{y}}]$. The position of its $n$-th antenna element, relative to this center, is given by
$$
\mathbf{a}_n = l_n \begin{bmatrix} \cos(\phi_n) \\ \sin(\phi_n) \end{bmatrix},
$$
where $l_n$ represents the radial distance from the center, and $\phi_n$ is the orientation angle of the antenna.

The transmission process follows the SM principle: only one BS antenna, indexed by \( v \in \{1, \dotsc, M\} \), is active per symbol interval and transmits a known unmodulated pulse \( s(t) \) with power \( P \) over a duration \( T \). The transmit signal from the \( m \)-th BS antenna is defined as
\begin{equation}
x_m(t)=\begin{cases}
s(t), & \text{if } m = v, \\
0, & \text{otherwise}.
\end{cases}
\end{equation}
The signal emitted by the active antenna \( \mathrm{T}_v \) reaches the \( n \)-th MD antenna with a delay \( \tau_n^v \), corresponding to the strongest propagation path.\footnote{In line with \cite{Mohaned}, this work focuses on the dominant ray due to the sparse nature of mmWave channels. Extensions to multi-ray scenarios can potentially further improve DSK performance, as discussed in \cite{QiKobSud:VT}.} The transmisions are subject to a random carrier frequency offset \( \Delta f \) due to oscillator mismatch, introducing an unknown phase rotation. The baseband received signal at the \( n \)-th MD antenna is expressed as \cite{Mohaned}
\begin{equation}\label{eq:01}
\rv{r}_n(t) = \rho \, \mathrm{e}^{-\jmath 2\pi f_c \tau_n^v} \, \mathrm{e}^{-\jmath 2\pi \Delta f t} \, s(t - \tau_n^v) + \rv{w}_n(t),
\end{equation}
where \( \rho \) is the path gain and \( \rv{w}_n(t) \sim \mathcal{CN}(0, \sigma^2) \) denotes complex additive white Gaussian noise.

\begin{figure}[t]
\centering
\includegraphics[scale=0.3]{./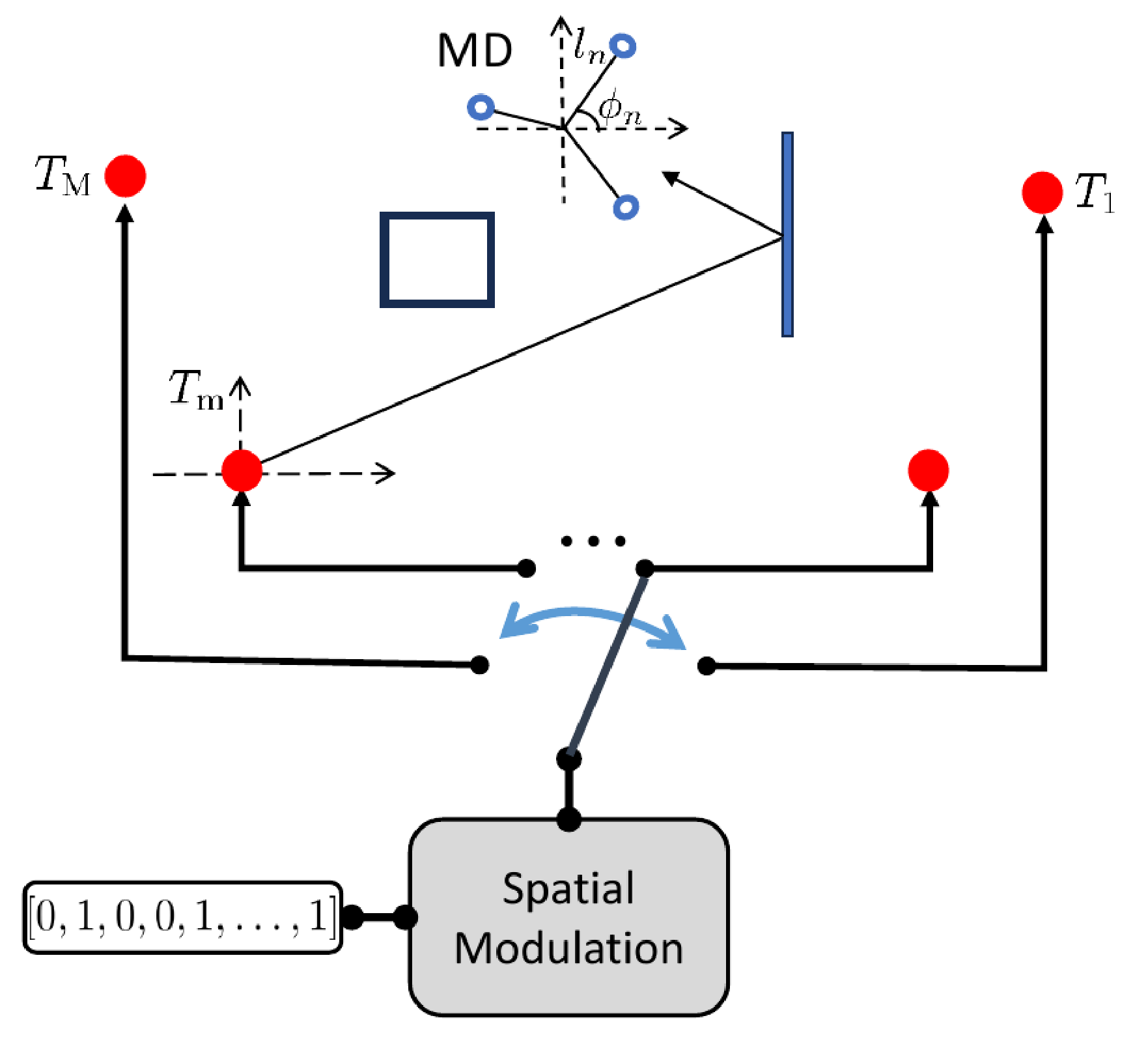}
\caption{Illustration of the mmWave DAS model.}
\label{Fig:SysMod}
\end{figure}
We assume that neither the BS nor the MD has prior knowledge of the instantaneous channel coefficients $\{\alpha_n\}_{n=1}^N$ or the oscillator-induced phase noise $\Delta f$. The MD is, however, assumed to have access to a Time-Difference-of-Arrival (TDoA) profile associated with each BS antenna. This TDoA profile serves as a spatial fingerprint that enables inference of the DoA during the detection process. The TDoA generating vector for BS-antenna \( T_m \) is denoted as
\[
\boldsymbol{\Delta\tau}^m = [\tau^m_2 - \tau^m_1, \dotsc, \tau^m_N - \tau^m_1].
\]
This vector fully specifies the TDoA structure, from which the delay between any antenna pair \( (l,k) \) can be reconstructed as:
\[
\tau^m_l - \tau^m_k = \boldsymbol{\Delta\tau}^m(l) - \boldsymbol{\Delta\tau}^m(k).
\]
Without loss of generality, we assume \( M \) is a power of two to facilitate binary bit-to-antenna mapping. While the analysis focuses on the downlink for clarity, the proposed DSK scheme and its performance metrics can be readily adapted to uplink configurations through appropriate beam-steering techniques.

\subsection{Transmission and Detection Processes}

At each symbol period, one BS antenna transmits an unmodulated pulse \( s(t) \), identical across antennas and time slots in waveform, amplitude, and phase, akin to pilot signals. Bit sequences are segmented into blocks of \( \log_2(M) \) bits and mapped to transmit antenna indices following the SSK paradigm \cite{JegGhrSzc:J1}. Unlike SSK, DSK employs a direction-based detection strategy. The receiver can infers the active antenna index by: (i) estimating the TDoA vector from received signals, and (ii) applying maximum likelihood (ML) detection using pre-stored TDoA signatures. This approach bypasses channel coefficient estimation, operating independently of CSI. The optimal ML detector is derived in the next section.

\subsection{Optimal Detector}\label{Sec:OneStageDetection}

Define
\[
\alpha^{v}_n \overset{\Delta}{=} \rho \mathrm{e}^{-\jmath 2\pi f_c \tau_n^v} \mathrm{e}^{\jmath 2\pi \Delta f t}
\]
as the complex channel-path coefficient incorporating phase noise. The received signal in \eqref{eq:01} becomes
\begin{equation}\label{eq:RecievedSignal}
\rv{r}_n(t) = \alpha^{v}_n s(t - \tau^v_n) + \rv{w}_n(t).
\end{equation}
Let $H_m$ denote the hypothesis that BS antenna $m$ is active:
\[
H_m: \rv{r}_n(t) \sim \mathcal{CN}(\alpha_n s(t - \tau^m_n), \sigma^2), \quad \forall n \in [1, N].
\]
The ML detector selects the antenna index that maximizes the conditional likelihood of the received vector $\RV{r} = [\rv{r}_1, \dotsc, \rv{r}_N]$ given the known TDoA signature $\boldsymbol{\Delta\tau}^m = [\tau^m_2 - \tau^m_1, \dotsc, \tau^m_N - \tau^m_1]$:
\begin{equation}\label{eq:ML}
\tilde{m} = \arg\max_{m \in \{1,\dotsc,M\}} f_{\RV{r}|\boldsymbol{\Delta\tau}}(\RV{r}|\boldsymbol{\Delta\tau}^m).
\end{equation}
Since $\RV{r}$ has dimension $N$ while $\boldsymbol{\Delta\tau}^m$ has dimension $N-1$, standard factorization into independent marginals is not possible. Now, using the chain rule:
\begin{equation}\label{eq:DetectorMLTDoA}
\begin{aligned}
&f_{\RV{r}|\RV{\Delta}\RV{\tau}}(\V{r}|\V{\Delta}\V{\tau}^m)\\&=f_{\RV{r}|\RV{\Delta}\RV{\tau}}({r}_1,\dotsc,{r}_n|\tau^m_2-\tau^m_1,\dotsc, \tau^{m}_{N}-\tau^m_1)\\
&{=}\prod_{n=1}^{N}f_{\rv{r}_n|\RV{\Delta}\RV{\tau}, \rv r_{1\to n-1}}(r_{n}|\V{\Delta}\V{\tau}^{m}_{n,1\to n-1},\V{r}_{1\to n-1}),
\end{aligned}
\end{equation}
where $\RV{r}_{1\to n-1} = [\rv{r}_1, \dotsc, \rv{r}_{n-1}]$ and $\boldsymbol{\Delta\tau}^{m}_{n,1\to n-1} = [\tau^m_n - \tau^m_1, \dotsc, \tau^m_n - \tau^m_{n-1}]$.

The following lemma provides the closed-form expression of the likelihood function, enabling tractable ML detection.

\begin{lem}\label{Lem:DetectorTDoA}
The optimal detector is equivalent to a weighted summation of cross-correlations over TDoA-aligned signals:
\begin{equation}\label{eq:Lem1}
\tilde{m} = \arg\max_{m \in \{1, \dotsc, M\}} \sum_{i=2}^{N} \frac{1}{i-1} \sum_{j=1}^{i-1} \mathrm{Re} \left\{ \int \frac{\alpha^m_i}{\alpha^m_j} R^m_{i,j}(t) \, dt \right\},
\end{equation}
where $R^m_{i,j}(t) = \rv{r}_j(t - \Delta\tau^m_{i,j}) \rv{r}^*_i(t)$ and $\Delta\tau^m_{i,j} = \tau^m_i - \tau^m_j$.
\end{lem}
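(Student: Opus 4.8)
The plan is to evaluate each factor of the chain-rule factorization in \eqref{eq:DetectorMLTDoA} in closed form and then show that, after taking logarithms, every contribution that is not an $m$-invariant constant collapses onto the cross-correlation structure of \eqref{eq:Lem1}. First I would pass to the log-domain, so the ML rule becomes $\tilde m=\arg\max_m\sum_{n=1}^{N}\log f_{r_n\mid\boldsymbol{\Delta\tau},\,r_{1\to n-1}}\!\left(r_n\mid\boldsymbol{\Delta\tau}^m_{n,1\to n-1},r_{1\to n-1}\right)$, so that it suffices to characterize a single conditional density under hypothesis $H_m$.

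The key structural fact, which I would establish before any Gaussian algebra, is that the inter-antenna amplitude ratio is a \emph{known pure phase determined solely by the TDoA}: from $\alpha^m_n=\rho\,\mathrm e^{-\jmath2\pi f_c\tau^m_n}\mathrm e^{\jmath2\pi\Delta f\,t}$ one obtains $\alpha^m_i/\alpha^m_j=\mathrm e^{-\jmath2\pi f_c\,\Delta\tau^m_{i,j}}$, so that the unknown path gain $\rho$ and the phase-noise factor $\mathrm e^{\jmath2\pi\Delta f\,t}$ cancel identically (this cancellation is, incidentally, the algebraic origin of DSK's phase-noise immunity). Consequently the time-shifted, phase-corrected observation $\hat r_{n\mid j}(t)\triangleq(\alpha^m_n/\alpha^m_j)\,r_j(t-\Delta\tau^m_{n,j})$ has signal part \emph{exactly} equal to that of $r_n$, namely $\alpha^m_n s(t-\tau^m_n)$, and a noise part of the same variance $\sigma^2$ that is independent across $j$. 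Thus $r_{1\to n-1}$ supplies $n-1$ i.i.d.\ noisy copies of the single deterministic waveform impinging on antenna $n$.

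Given this, under $H_m$ the conditional density of $r_n$ is complex Gaussian, centered at the MMSE estimate of the signal—which, for $n-1$ equal-variance i.i.d.\ looks, is the sample average $\hat m_n=\frac{1}{n-1}\sum_{j=1}^{n-1}\hat r_{n\mid j}$—with residual variance equal to the common per-antenna noise level $\sigma^2$. Expanding $\bigl|r_n-\hat m_n\bigr|^2$ in the log-density, the energy $\|r_n\|^2$ and the normalization $1/\sigma^2$ are $m$-free, so the principal discriminating contribution is $\tfrac{2}{\sigma^2}\,\tfrac{1}{n-1}\sum_{j=1}^{n-1}\mathrm{Re}\langle r_n,\hat r_{n\mid j}\rangle$; dropping the common factor $2/\sigma^2$ leaves precisely the weight $1/(n-1)$. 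Invoking $\mathrm{Re}\,z=\mathrm{Re}\,z^{\ast}$ and $R^m_{i,j}(t)=r_j(t-\Delta\tau^m_{i,j})r_i^{\ast}(t)$, I would identify $\mathrm{Re}\langle r_n,\hat r_{n\mid j}\rangle=\mathrm{Re}\{\int(\alpha^m_n/\alpha^m_j)R^m_{n,j}(t)\,dt\}$; relabelling $n\to i$ and summing over $i=2,\dots,N$, $j=1,\dots,i-1$ (the $n=1$ factor carrying no predictor and only an $m$-free constant) reproduces \eqref{eq:Lem1} exactly, weight $1/(i-1)$ included.

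The step I expect to be the main obstacle is certifying that the \emph{remaining} terms are genuinely $m$-invariant, i.e., that the reduction above is an equivalence and not merely a dominant-term approximation. The quantities $\|r_n\|^2$ and the variance normalization are manifestly free of $m$; the delicate one is the predictor self-energy $\bigl\|\tfrac{1}{n-1}\sum_j\hat r_{n\mid j}\bigr\|^2$. Its signal content equals $\rho^2 E_s$ (with $E_s=\int|s|^2\,dt$) because all aligned predictors carry the identical waveform, and the unit-modulus property $|\alpha^m_i/\alpha^m_j|=1$ forces every diagonal contribution to the $m$-free value $\|r_j\|^2$; however, the off-diagonal noise-correlation terms $\int w_j(t-\Delta\tau^m_{n,j})w_k^{\ast}(t-\Delta\tau^m_{n,k})\,dt$ retain a residual dependence on $m$ through the shifts. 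The crux is therefore to argue—via the unit-modulus phase ratios, the common pulse energy, and a high-SNR (or narrow pulse-autocorrelation) argument—that these residual quadratic terms are subdominant and do not perturb the $\arg\max$, thereby establishing the equivalence asserted in Lemma~\ref{Lem:DetectorTDoA} (exactly in the relevant operating regime).
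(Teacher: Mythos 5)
Your route is in essence the paper's (Appendix~\ref{Proof:DetectorTDoA}): chain-rule factorization, passage to the log-domain, the unit-modulus identity $\alpha^m_i/\alpha^m_j=\mathrm e^{-\jmath2\pi f_c\Delta\tau^m_{i,j}}$, and a representation of the mean of $r_n$ through the phase-corrected, TDoA-shifted predecessors with the uniform $\tfrac{1}{n-1}$ average --- this last step is exactly the paper's \eqref{eq:meani}. The genuine gap is your Gaussian-conditioning shortcut. The assertion that $r_n$ given the predictors is complex Gaussian centered at the sample average $\hat m_n$ with residual variance $\sigma^2$ is false, and the error is load-bearing. If you actually perform the conditioning (diffuse prior on the common waveform, $n-1$ equal-variance looks), the posterior signal variance is $\sigma^2/(n-1)$ and the residual variance of $r_n-\hat m_n$ is $\sigma^2\,n/(n-1)$, which depends on $n$; carried through, the discriminating term becomes $\tfrac{2}{\sigma^2 n}\sum_{j}\mathrm{Re}\langle r_n,\hat r_{n\mid j}\rangle$, i.e., pairwise weights $1/n$ rather than the $1/(i-1)$ claimed in \eqref{eq:Lem1}. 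You recover the lemma's weights only because of the incorrect variance. The paper sidesteps this trap by never forming a posterior: it writes $f_{\rv r_n\mid\cdot}=\mathbb E_{\rv w}\{f_{\rv r_n\mid\cdot,\rv w}\}$ with the per-sample variance pinned at $\sigma^2$ inside the expectation, expands the exponent into the four terms \circledone--\circledfour, and disposes of the noise-dependent terms \circledthree{} and \circledfour{} exactly, using stationarity of $\rv w$ under time shifts, so no $n$-dependent variance renormalization ever enters and the $1/(n-1)$ emerges directly from the averaged mean representation.

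On your flagged ``crux'': you are right that for $n\geq 3$ the predictor self-energy $\bigl\|\tfrac{1}{n-1}\sum_i(\alpha^m_n/\alpha^m_i)\,r_i(t-\Delta\tau^m_{n,i})\bigr\|^2$ contains off-diagonal cross terms (signal$\times$signal, signal$\times$noise, noise$\times$noise) that retain $m$-dependence through the relative shifts, so the unit-modulus-plus-shift-invariance argument that makes \circledone{} manifestly $m$-free at $n=2$ does not extend verbatim. Be aware, though, that the paper's own proof handles exactly this point only by asserting that ``the same analogy holds'' from \eqref{eq:mean} to \eqref{eq:MLfinal}; your observation is therefore as much a critique of the published argument as of your own. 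Still, since you were asked to prove the lemma, leaving that step as a high-SNR/narrow-autocorrelation hope means your proposal as written establishes a dominant-term approximation rather than the asserted equivalence; to close it you would need either to restrict to $N=2$, where the claim is exact, or to show the off-diagonal quadratic contributions are $m$-invariant after the expectation over the predecessor noises (the noise$\times$noise part does survive this expectation only through $m$-free autocorrelation values when the shifts exceed the pulse support, which is the quantitative condition your argument should state explicitly).
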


\begin{proof}
See Appendix \ref{Proof:DetectorTDoA}.
\end{proof}

In the noiseless case with $H_m$ true (i.e., $m = v$), the detector output simplifies:
\begin{equation}\label{eq:noislessMequalV}
	\begin{aligned}
&\textup{Re}\Big\{\int\frac{\alpha^{m}_{i}}{\alpha^{m}_{j}}R^m_{i,j}(t)\,{d}t\Big\}\\&=\textup{Re}\Big\{\int\frac{\alpha^m_i}{\alpha^m_j} \alpha^m_j s(t-\tau^m_j-(\tau^m_i-\tau^m_j)) \left(\alpha^m_i\right)^{*}
\\&\qquad\qquad\qquad\qquad s^{*}(t-\tau^m_i)\,{d}t\Big\}\\
&=\textup{Re}\Big\{\int |\alpha^m_i|^2s(t-\tau^m_i-\tau^m_j-(\tau^m_i-\tau^m_j))s^{*}(t)\,{d}t\Big\}
\\&\overset{(a)}{=} \textup{Re}\Big\{\int |\rho|^2s(t)s^{*}(t)\,{d}t\Big\}=\rho^2\textup{Re}\Big\{\int \bar{s}(t)\bar{s}^{*}(t)\,{d}t\Big\}\\&=\rho^2E_{\mathrm s}.
  \end{aligned}	
\end{equation}

The equality in (a) follows from the fact that
$$
|\alpha^{m}_i|^2 = \left|\rho\, \mathrm{e}^{-\jmath 2\pi f_c \tau_n^v} \, \mathrm{e}^{\jmath 2\pi \Delta f \tau_n^v}\right|^2 = \rho^2.
$$
Conversely, under $H_m$ false ($m \ne v$), the integrand becomes misaligned in time:
\begin{equation}\label{eq:noislessMnotqV}
	\begin{aligned}
		&\textup{Re}\Big\{\int\frac{\alpha^m_i}{\alpha^m_j}R^m_{i,j}(t)\,{d}t\Big\}\\
		&=\textup{Re}\Big\{\int\frac{\alpha^m_i}{\alpha^m_j} \alpha^v_js(t-\tau^v_j-(\tau^m_i-\tau^m_j))(\alpha^v_i)^{*}s^{*}(t-\tau^v_i)\,{d}t\Big\}\\
		&=\textup{Re}\Big\{\frac{\alpha^m_i}{\alpha^m_j}\alpha^v_j \left(\alpha^v_i\right)^{*} \Big\}\int s(t-\Delta\tau^v_{i,j}+\Delta\tau^m_{i,j})s^{*}(t)\,{d}t\\&\leq \alpha^2 \int s(t-\Delta\tau^v_{i,j}+\Delta\tau^m_{i,j})s^{*}(t)\,{d}t,
	\end{aligned}	
\end{equation}
which approaches zero when $\Delta\tau^v_{i,j}-\Delta\tau^m_{i,j}$ is large and is independent of the phase noise.

\begin{remark}
The weighting factor $\alpha^m_i/\alpha^m_j$ does not require CSI or phase noise estimation. It simplifies to
\begin{equation}
	\begin{aligned}
		 \frac{\alpha^m_i}{\alpha^m_j} \triangleq\frac{\rho\,\text{e}^{-\jmath 2\pi f_c\tau^m_i}}{\rho\,\text{e}^{-\jmath 2\pi f_c\tau^m_j}}=\text{e}^{-\jmath 2\pi f_c(\tau^m_i-\tau^m_j)}.
	\end{aligned}
\end{equation}
which can be computed from the known TDoA. Alternatively, the absolute value of the correlation terms can be used.
\end{remark}

\section{Direction vs. Channel Coherence Time}\label{Sec:PostionCoherenceTime}

\subsection{Conceptual Framework}\label{Sec:Motivation}

In mobile communication systems, the rate at which the channel impulse response varies is quantified by the CCT, which scales proportionally with the wavelength $\lambda$ and inversely proportional with the user velocity $v$. Together with the phase noise, the CCT fundamentally bounds the reliability of conventional amplitude/phase modulation schemes and SM variants such as Space Shift SSK, which depend on coherent detection using CSI.

In contrast, DSK relies on the stability of the DoA (or equivalently, TDoA) rather than the full CSI. Empirical evidence and prior work (e.g., \cite{Mohaned}) suggest that the DoA varies at a significantly slower rate than the channel coefficients, particularly in the case of road side units. This observation underpins the robustness of DSK to mobility-induced variations.

To generalize this intuition, we analyze the temporal stability of the TDoA under arbitrary receiver motion. Specifically, we analyze the DCT as the time interval over which the TDoA vector remains sufficiently invariant to support coherent detection in DSK. The objective of this section is to derive analytical expressions for both CCT and DCT in the presence of mobility and phase noise, and to quantify the coherence gain of DSK relative to SSK.

\subsection{Statistical Preliminaries}\label{Sec:Preliminary}

Let $\tau$ and $\tau'$ denote the propagation delays from a fixed transmitter to a mobile receiver before and after a displacement time $t_c$. The change in delay is given by
\[
\Delta\tau = \tau' - \tau \simeq \frac{v t_c}{c} \cos(\rv{\phi}),
\]
where $c$ is the speed of light and $\rv{\phi}$ is the random direction of motion, assumed uniformly distributed in $[0, 2\pi)$. While $\Delta\tau$ depends on $\rv{\phi}$, the absolute delays $\tau$ and $\tau'$ are statistically independent of $\rv{\phi}$.

Now consider the effect of phase noise. Let $\Delta f$ and $\Delta f'$ denote the frequency offsets before and after displacement. Then the composite phase rotation difference, $2\pi(\Delta f\tau - \Delta f'\tau')$, is independent of $\rv{\phi}$ when $\Delta f \neq \Delta f'$, due to the linear independence of the coefficient vectors $[1,1]$ and $[\Delta f, \Delta f']$.

For TDoA between a pair of receive antennas, we define
\[
z = (\tau_1 - \tau_2) - (\tau'_1 - \tau'_2),
\]
which depends on the displacement angle $\rv{\phi}$. However, the individual TDoAs (e.g., $\tau_1 - \tau_2$) remain independent of $\rv{\phi}$.

\subsection{Analytical Characterization}\label{Sec:MainResults}

Let $t_{\mathrm{CCT}}$ and $t_{\mathrm{DCT}}$ denote the coherence times of the channel and direction, respectively, under a wideband signal of bandwidth $B$ and a mobile user traveling at velocity $v$. Let $d$ be the typical BS-MD distance and $l$ be the antenna separation at the receiver. In the presence of random phase noise, our finding can be summarized as follows.:
\begin{align}
t_{\mathrm{CCT}} &\simeq \frac{1}{2\pi} \frac{\lambda}{v} \, J_0^{-1} \left( \frac{1}{\sqrt{2}} \sqrt{ \frac{B}{B - |\Delta f - \Delta f'|} } \right) \leq \frac{9}{16\pi} \frac{\lambda}{v}, \\
t_{\mathrm{DCT}} &\geq \frac{1}{2\pi} J_0^{-1} \left( \frac{1}{\sqrt{2}} \right) \frac{d}{v} \frac{c}{lB} = \frac{9}{16\pi} \frac{d}{v} \frac{c}{lB},
\end{align}
where $J_0^{-1}(\cdot)$ is the inverse of the zeroth-order Bessel function of the first kind. Consequently, the ratio between DCT and CCT satisfies
\begin{equation}
\frac{t_{\mathrm{DCT}}}{t_{\mathrm{CCT}}} \geq \frac{d}{\lambda} \cdot \frac{c}{lB},
\end{equation}
highlighting a potential coherence gain of several orders of magnitude, especially in high-frequency bands (e.g., mmWave or sub-THz) where $\lambda$ is small and $B$ is large. This gain enables DSK to operate with lower overhead, enhanced reliability, and improved robustness to both phase noise and user mobility.
\subsubsection{Channel Coherence Time in the Presence of Phase Noise}

To derive a closed-form expression for the CCT, we consider a baseband pulse shaped by an ideal $\sinc$ filter with period $T = 1/B$, where $B$ is the system bandwidth \cite{DavTsePraVis:B2005}. The time-domain pulse is expressed as
\begin{equation}\label{eq:sinc}
	s(t) = \sinc[\pi t/T] = \frac{\sin[\pi t/T]}{\pi t/T},
\end{equation}
whose Fourier transform $S(f)$ is a rectangular function of bandwidth $B$ and amplitude $T$.

Assume the mobile device (MD) moves at speed $v$ in a random direction $\rv{\varphi}$. It first receives a reference unmodulated signal $\rv x(t)$, yielding an accurate estimate of the channel coefficient. After a time displacement $t_c$, a second signal $\rv x'(t)$ is transmitted. The propagation delays of the two signals are denoted by $\rv \tau$ and $\rv \tau'$, respectively. The received signals are modeled as:
\begin{equation}
\begin{aligned}
	\rv x(t) &= \rho \mathrm{e}^{-\jmath 2\pi f_c \rv \tau} \mathrm{e}^{-\jmath 2\pi \Delta f t}s(t - \rv \tau), \\
	\rv x'(t) &= \rho' \mathrm{e}^{-\jmath 2\pi f_c \rv \tau'} \mathrm{e}^{-\jmath 2\pi \rv \Delta f' t}s(t - \rv \tau'),
\end{aligned}
\end{equation}
where $\Delta f$ and $\rv \Delta f'$ represent oscillator frequency offsets due to phase noise before and after the displacement, respectively.

The CCT quantifies the temporal stability of the channel coefficient under mobility by measuring the correlation between $\rv x(t)$ and $\rv x'(t)$. Since the channel phase is governed by the ToA (i.e., $\rho \mathrm{e}^{-\jmath 2\pi f_c \tau}$), strong correlation implies $\rv \tau' \approx \rv \tau$, indicating phase coherence. As $\rv \tau'$ deviates, coherence is lost. Formally, the CCT is the maximum $t_c$ such that the normalized correlation remains above a predefined threshold $J_{\mathrm{th}}$:
\begin{equation}\label{eq:correlFunc}
	J_{\mathrm{CCT}}(t_c) = \left| \frac{\mathbb{E}_{\rv \varphi} \left\{ \int \rv x(t)\left(\rv x'(t)\right)^* \mathrm{d}t \right\}}{ \sqrt{ \int |\rv x(t)|^2 \mathrm{d}t \int |\rv x'(t)|^2 \mathrm{d}t } } \right| \geq J_{\mathrm{th}}.
\end{equation}

The numerator simplifies as follows:
\begin{subequations}\label{eq:CsiCoherenceTime1}
		\begin{align}
	& \left|\mathbb{E}_{\rv{\varphi}}\left\{\bigintssss x(t)\left(\rv x'(t)\right)^{*}{d}t\right\}\right|
        \\&
        =\rho\rho'\Bigg|\mathbb{E}_{\rv{\varphi}}\Bigg\{\bigintssss e^{-j2\pi f_c(\rv \tau-\rv\tau')}e^{j2\pi \Delta f't_c}e^{-j2\pi(\Delta f-\Delta f')t}\nonumber\\
        &\qquad s(t-\rv \tau) \left(\rv s(t-\rv\tau')\right)^{*}{d}t\Bigg\}\Bigg|
        \\
			&=\rho\rho'\Bigg|e^{j2\pi \Delta f't_c}\mathbb{E}_{\rv{\varphi}}\Bigg\{\bigintssss e^{-j2\pi f_c(\rv \tau-\rv \tau')}s(t)\nonumber\\
            &\qquad \left(e^{j2\pi(\Delta f-\Delta f')t}\rv s(t+\rv \tau-\rv\tau')\right)^{*}{d}t\Bigg\}\Bigg|\label{eq:second}
            \\
			&=\rho\rho'\Bigg|\frac{1}{2\pi}\int^{2\pi}_{0}\int^{\frac{B}{2}}_{\frac{-B}{2}} e^{-j2\pi f_c(\rv \tau-\rv \tau')}S(f)\nonumber\\
            &\qquad\left(e^{j2\pi f(\rv \tau-\rv \tau')}\rv S(f-(\Delta f-\Delta f'))\right)^{*}{d}f d\varphi\Bigg|\label{eq:Parseval}
			\\
			&=\rho\rho'\Bigg|\frac{1}{2\pi}\int^{2\pi}_{0}\int^{\frac{B}{2}}_{\frac{-B}{2}} e^{-j2\pi(f_c+f)(\rv \tau-\rv \tau')}S(f)\nonumber\\
            &\qquad\rv S(f-(\Delta f-\Delta f')){d}f d\varphi\Bigg|
			\\ 
   &\simeq\rho\rho'\left|\frac{T^2}{2\pi}\int^{2\pi}_{0}\int^{\frac{B}{2}-|\Delta f-\Delta f'|}_{\frac{-B}{2}} e^{j2\pi \frac{t_c v}{\lambda}\cos[\varphi]}{d}f d\varphi     \right|
			\label{eq:fourth}\\&= \rho\rho'T^2\max\left(0,B-|\Delta f-\Delta f'|\right)\left|\frac{1}{\pi}\int^{\pi}_{0} e^{j2\pi \frac{t_c v}{\lambda}\cos[\varphi]} d\varphi \right|\\
   &=\rho\rho'\max\left(0,B-|\Delta f-\Delta f'|\right)\nonumber\\&\qquad\frac{1}{\pi}\left|\int_{-f_{\max}}^{f_{\max}}\frac{1}{f_{\max}\sqrt{1-\frac{z^2}{f^2_{\max}}}}\cos[2\pi z]\,{d}z\right|\\
   &=\rho\rho'T^2\max\left(0,B-|\Delta f-\Delta f'|\right) \left|J_{0}(2\pi f_{\max})\right|,
		\end{align}
\end{subequations}
where $f_{\mathrm{max}} = \frac{t_c v}{\lambda}$ and $\lambda = \frac{c}{f_c}$. The function $J_0$ denotes the zeroth-order Bessel function of the first kind. In the above derivation, \eqref{eq:Parseval} results from applying Parseval’s identity to transition from time to frequency domain. The simplification in \eqref{eq:second} relies on the statistical independence between the displacement direction $\rv{\varphi}$ and the post-displacement oscillator offset $\Delta f' t_c$. Furthermore, the approximation $(\rv \tau - \rv \tau') \approx \frac{v t_c}{c} \cos(\rv \varphi)$ is valid for displacements small relative to the wavelength and enables the tractable form in \eqref{eq:fourth}.

As for the denominator in \eqref{eq:correlFunc}, the signal energy is given by
\begin{equation}\label{eq:energy}
\int |x(t)|^2 dt = \int_{-\frac{B}{2}}^{\frac{B}{2}} |S(f)|^2 df = B \rho^2 T^2.
\end{equation}
An identical expression holds for $\rv x'(t)$ due to symmetry in pulse shaping. Substituting the numerator and denominator into \eqref{eq:correlFunc}, and under the condition that the phase noise remains confined within the signal bandwidth (i.e., $|\Delta f - \Delta f'| < B$), the coherence function $J_{\mathrm{CCT}}(t_c)$ takes the form:
\begin{equation}\label{eq:CsiCoherenceTime}
\begin{aligned}
J_{\mathrm{CCT}}(t_c)
&= \frac{B - |\Delta f - \Delta f'|}{B} \left| \frac{1}{\pi} \int_{-f_{\max}}^{f_{\max}} \frac{\cos(2\pi z)}{f_{\max} \sqrt{1 - \frac{z^2}{f_{\max}^2}}} , dz \right| \\
&= \frac{B - |\Delta f - \Delta f'|}{B} \left| J_0(2\pi f_{\max}) \right|,
\end{aligned}
\end{equation}
where $f_{\max} = \frac{t_c v}{\lambda}$ is the maximum Doppler frequency. Notably, $J_0(2\pi f_{\max})$ corresponds to the classical channel coherence function absent phase noise, as described in \cite[Sec. II-A, (2)]{TelTse:00}. The term $$\frac{B - |\Delta f - \Delta f'|}{B} \leq 1$$ quantifies the additional decorrelation induced by phase noise. The coherence remains high when $f_c |\tau - \rv \tau'| \ll 1$, i.e., for sufficiently short displacements relative to the wavelength.

%This expression highlights that **CCT is inversely proportional to the carrier frequency** and **critically limited by phase noise**, both of which accelerate channel decorrelation. Notably, when $|\Delta f - \Delta f'| \rightarrow B$, coherence collapses entirely. This observation underscores the fundamental bottleneck that phase noise imposes on phase/amplitude-modulated schemes under mobility.
\begin{figure*}
	%\rule{\textwidth}{0.5pt}
	\centering
	~ %add desired spacing between images, e. g. ~, \quad, \qquad, \hfill etc.
	%(or a blank line to force the subfigure onto a new line)
	\begin{subfigure}[b]{0.45\textwidth}
		\centering
		\includegraphics[scale=0.3]{./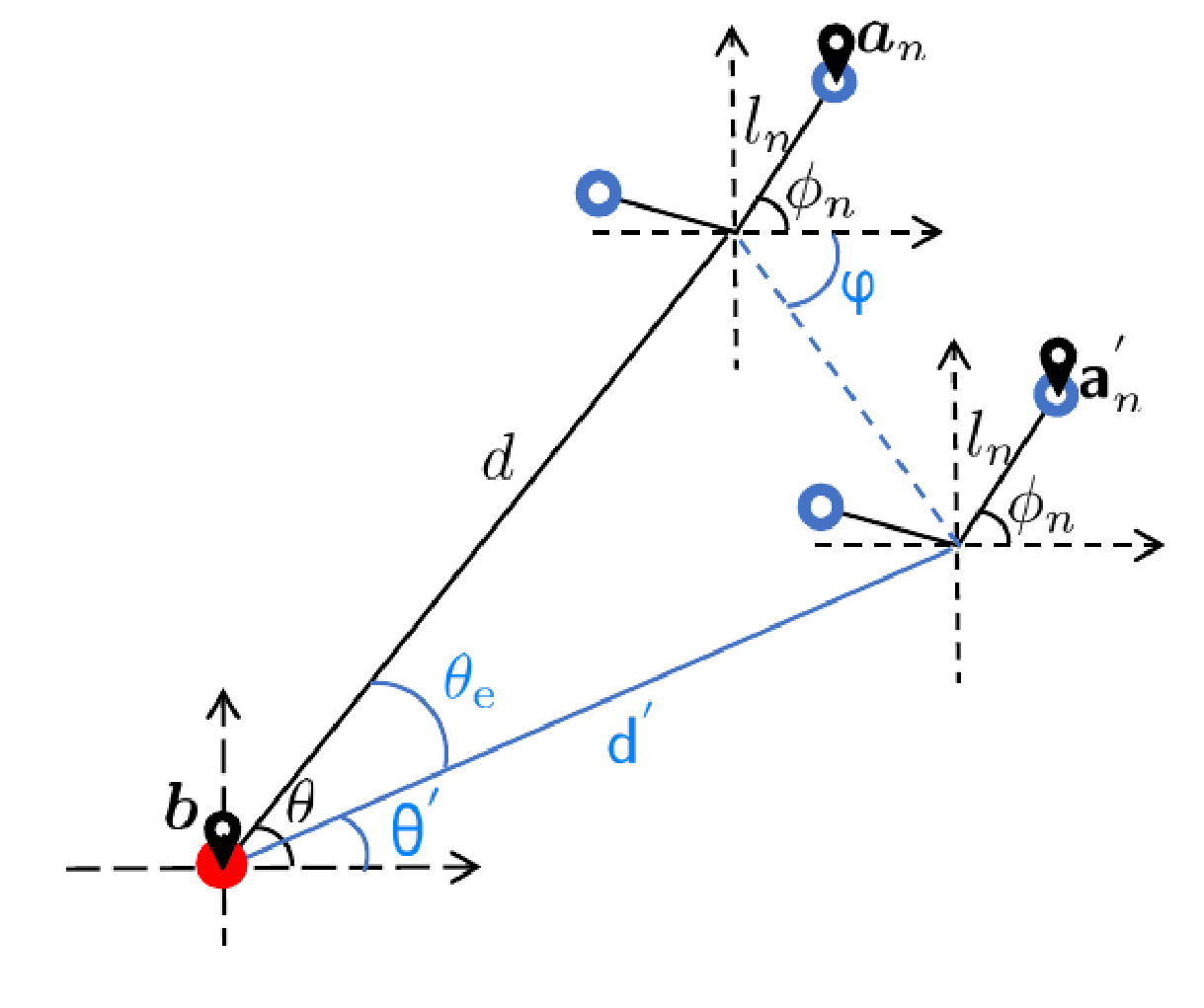}
		\caption{LoS transmissions.}
		\label{Fig:CoherenceTimeLoS}
	\end{subfigure}
	\begin{subfigure}[b]{0.45\textwidth}
		\centering
		\includegraphics[scale=0.3]{./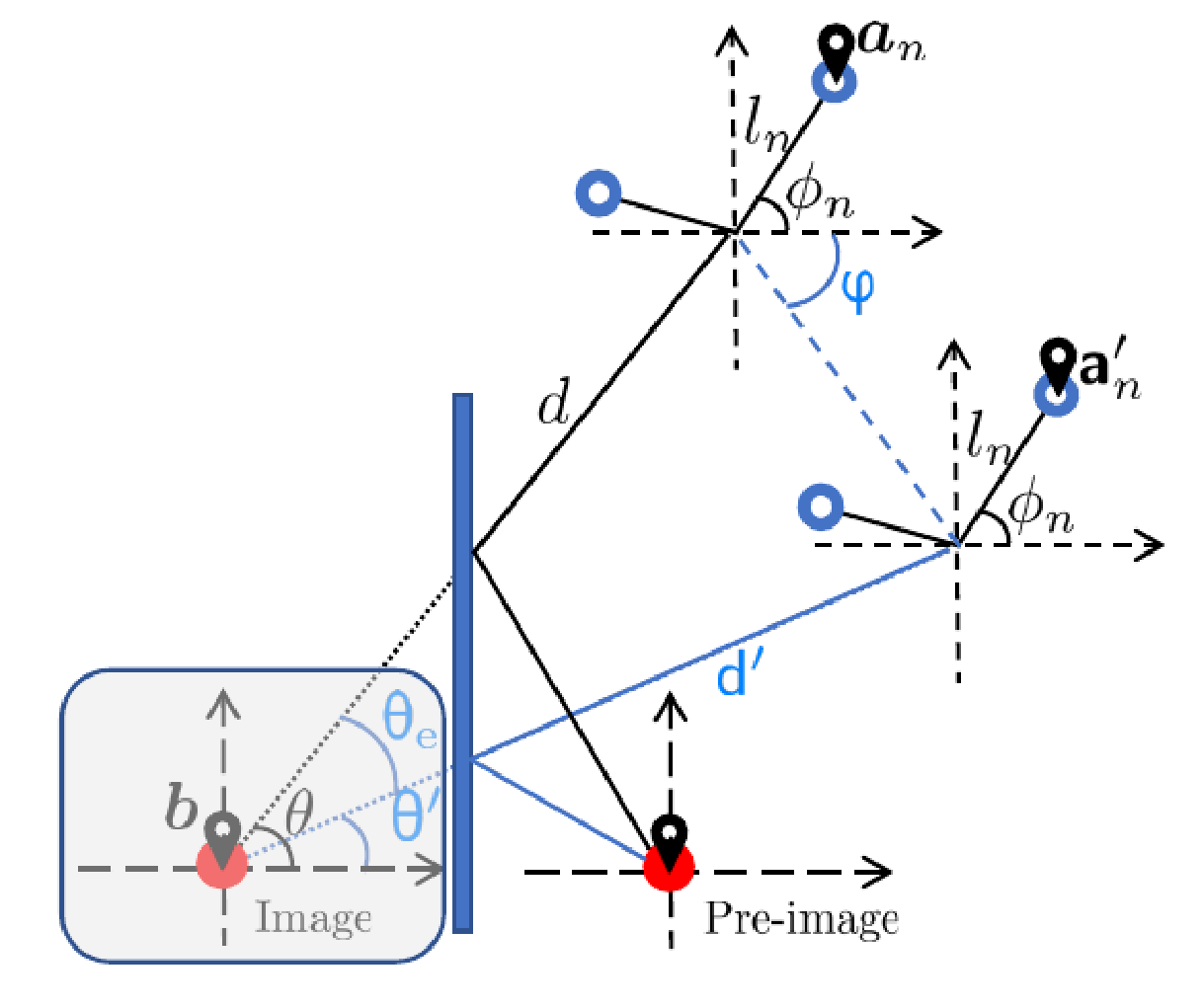}
		\caption{NLoS transmissions.}
		\label{Fig:CoherenceTimeNLoS}
	\end{subfigure}
	\caption{Geometric illustration of the propagation paths between the BS and the MD.}\label{Fig:CoherenceTime}\vspace{-0.3cm}
\end{figure*}

%We now consider the temporal stability of TDoA, which underpins DSK operation. Consider the received signals before and after a displacement of a duration $t_c$. Let $\tau_2 - \tau_1$ and $\rv \tau'_2 - \rv \tau'_1$ denote the pre- and post-displacement TDoAs between two receive antennas, respectively, measured before and after a displacement of duration $t_c$. 
\subsubsection{Direction Coherence Time (DCT)}

Let $\tau_2 - \tau_1$ and $\rv \tau'_2 - \rv \tau'_1$ denote TDoAs between two receive antennas, measured before and after a displacement of duration $t_c$, respectively. Define $\{ \rv x'_1(t), \rv x'_2(t) \}$ as the received signals after displacement of duration $d_c$ in a random direction. The direction coherence function is defined as
\begin{equation}
\begin{aligned}
J_{\mathrm{DCT}}(t_{\mathrm{c}}) =
\left| \frac{\mathbb{E}_{\rv{\varphi}} \left\{ \int \rv x'_1(t) \left(\rv x'_2(t+\tau_2 - \tau_1)\right)^* \mathrm{d}t \right\}}{
\sqrt{ \int |\rv x'_1(t)|^2 \mathrm{d}t \int |\rv x'_2(t)|^2 \mathrm{d}t } } \right|.
\end{aligned}
\end{equation}

The numerator can be expanded as follows:
\begin{subequations}\label{eq:DCTnumerator}
\begin{align}
& \left| \mathbb{E}_{\rv{\varphi}} \left\{ \int \rv x'_1(t) \left(\rv x'_2(t+\tau_2 - \tau_1)\right)^* \mathrm{d}t \right\} \right| \\
&= (\rho')^2 \left| \mathbb{E}_{\rv{\varphi}} \left\{ \int e^{-j 2\pi f_c (\rv\tau'_1 - \rv\tau'_2)} e^{j 2\pi \Delta f' (\tau_2 - \tau_1)} \right.\right. \nonumber\\
& \qquad\left. s(t - \rv \tau'_1) \left(s(t + \tau_2 - \tau_1 - \rv \tau'_2)\right)^* \mathrm{d}t \right\} \Bigg| \\
&= (\rho')^2 \left| e^{-j 2\pi f_c (\rv\tau'_1 - \rv\tau'_2)} e^{j 2\pi \Delta f' (\tau_2 - \tau_1)} \right. \nonumber\\
& \qquad \left. \mathbb{E}_{\rv{\varphi}} \left\{ \int s(t - \rv \tau'_1) \left(s(t + \tau_2 - \tau_1 - \rv \tau'_2)\right)^* \mathrm{d}t \right\} \right| \label{eq:DCT_indep} \\
%&= (\rho')^2 \left| \frac{1}{2\pi} \int_0^{2\pi} \int_{-B/2}^{B/2} e^{j 2\pi f (\rv \tau'_2 - \rv \tau'_1 - (\tau_2 - \tau_1))} \mathrm{d}f \mathrm{d}\varphi \right| \\
&=(\rho')^2\Bigg|\frac{1}{2\pi}\int^{2\pi}_{0}\int^{\frac{B}{2}}_{\frac{-B}{2}} e^{j2\pi f (\tau'_2-\tau'_1-(\tau_2-\tau_1))}S(f)\nonumber\\
&\qquad\qquad\qquad\left(S(f)\right)^{*}{d}f d\varphi     \Bigg|\\
&= (\rho' T)^2 \left| \frac{1}{2\pi} \int_0^{2\pi} \int_{-B/2}^{B/2} e^{j 2\pi f \Delta_{\mathrm{TDoA}}} \mathrm{d}f \mathrm{d}\varphi \right|,
\end{align}
\end{subequations}
where $\Delta_{\mathrm{TDoA}} \triangleq \rv \tau'_2 - \rv \tau'_1 - (\tau_2 - \tau_1)$ denotes the deviation of the TDoA under motion. 
The result in \eqref{eq:DCT_indep} follows from the statistical independence between $\tau'_1 - \tau'_2$ and $\rv \phi$, as discussed at the end of Sec.~\ref{Sec:Preliminary}.
%\underset{\Delta_{\mathrm{TDoA}}\nearrow}{\rightarrow} 0
%The DCT corresponds to the maximum displacement duration $t_c$ over which the TDoA remains sufficiently stable to ensure reliable DSK detection.
\begin{remark}
The numerator of the direction coherence function decays as $\Delta_{\mathrm{TDoA}}$ increases, i.e., as the deviation between the post- and pre-displacement TDoAs $(\rv \tau'_2 - \rv \tau'_1)$ and $(\tau_2 - \tau_1)$ grows. Specifically,
\begin{equation}
\int_{-B/2}^{B/2} e^{j 2\pi f \Delta_{\mathrm{TDoA}}} \mathrm{d}f = \frac{\sin(\pi B \Delta_{\mathrm{TDoA}})}{\pi \Delta_{\mathrm{TDoA}}},
\end{equation}
which vanishes as $|\Delta_{\mathrm{TDoA}}| \to \infty$. This implies that $J_{\mathrm{DCT}}(t_c)$ degrades rapidly with increasing TDoA mismatch, thereby limiting the coherence duration of DSK transmission under mobility.
\end{remark}

Substituting the energy expression from \eqref{eq:energy} for both $\rv x'_1(t)$ and $\rv x'_2(t)$, the direction coherence function simplifies to
\begin{equation}
\begin{aligned}
J_{\mathrm{DCT}}(t_c) 
= \frac{1}{B} \left| \mathbb{E}_{\rv{\varphi}} \left\{ \int_{-B/2}^{B/2} e^{j 2\pi f \Delta_{\mathrm{TDoA}}} \mathrm{d}f \right\} \right|.
\end{aligned}
\end{equation}
In contrast to the CCT derivation, where the ToA perturbation $(\rv \tau' - \tau)$ can be locally approximated as $\frac{t_c v}{c} \cos(\rv{\varphi})$ for small displacements in the order of a fraction of the wave length, this approximation fails in the DCT case. Applying the same approximation to $\Delta_{\mathrm{TDoA}}$ yields
\begin{equation}
\begin{aligned}
\Delta_{\mathrm{TDoA}} &= (\rv \tau'_2 - \tau_2) - (\rv \tau'_1 - \tau_1) \\
&= \frac{t_c v}{c} \cos(\rv{\varphi}) - \frac{t_c v}{c} \cos(\rv{\varphi}) = 0,
\end{aligned}
\end{equation}
which incorrectly suggests that $J_{\mathrm{DCT}}(t_c) = 1$ for all $t_c$. This would imply an infinite DCT, contradicting physical constraints and empirical observations. In other words, such an approximation is unrealistically favors DSK and does not accurately depicts its performance. Hence, the TDoA approximation is inapplicable in this setting and results in a significant overestimation of coherence. A more accurate kinematic model of the TDoA evolution is required to characterize the true performance of DSK under mobility.

\begin{remark}
It is important to underscore that within the CCT regime, the condition $\tau_1 \approx \rv \tau'_1$ and $\tau_2 \approx \rv \tau'_2$ implies $\tau_1 - \tau_2 \approx \rv \tau'_1 - \rv \tau'_2$. However, the reverse does not necessarily hold. This asymmetry explains why the DCT is inherently longer than the CCT, i.e., DoA remains stable even when ToAs vary. The robustness of DoA-based detection under mobility is therefore not just empirical but theoretically grounded.
\end{remark}

% \begin{remark}
% The direction coherence time governs the temporal stability of TDoA under motion and sets the coherence limit for DSK detection. In contrast to CCT, DCT remains invariant to common-mode ToA shifts. The asymmetry $\Delta_{\mathrm{TDoA}} \approx 0 \nRightarrow \tau_i \approx \rv \tau'_i$ implies that DCT can exceed CCT, making direction-based signaling inherently more robust to mobility-induced fading.
% \end{remark}
\par The TDoA depends on the propagation path between the BS antenna and the MD. In Fig. \ref{Fig:CoherenceTime}, we provide a geometric illustration of the propagation paths for the Non-LoS (NLoS) and LoS case scenarios. To compute the TDoA for an NLoS link, we consider the image of the BS antenna over the reflection plane of the signal ray, where $\V{b_{m}}=[b_{\mathrm {x},{m}},\, b_{\mathrm {y},{m}}]$ indicates the position of the image of the BS antenna. We use the prime symbol $(.)'$ to mark the parameters related to the MD location after displacement. We use $\theta_{\mathrm e}$  denotes the phase shift $\theta-\rv\theta'$.

\begin{theorem}\label{Lem:PositionCoherenceTime}
 The direction coherence function is given by
\begin{equation}
\begin{aligned}
J_{\mathrm{DCT}}(t_{\mathrm{c}}) &= \frac{1}{\pi} \Bigg| \int_{-g_{\max}}^{g_{\max}} \left( \frac{1}{g_{\max} \sqrt{1 - \frac{z^2}{g_{\max}^2}}} + \frac{1}{\sqrt{1 - z^2}} \right) \\
&\quad \times \sinc\left[ \pi \frac{B}{c} \left( q_1 (1 - \sqrt{1 - z^2}) - q_2 z \right) \right] \mathrm{d}z \Bigg|,
\end{aligned}
\end{equation}
where
\[
\begin{cases}
q_1 = l_2 \cos(\theta - \phi_2) - l_1 \cos(\theta - \phi_1), \\
q_2 = l_2 \sin(\theta - \phi_2) - l_1 \sin(\theta - \phi_1).
\end{cases}
\]
\end{theorem}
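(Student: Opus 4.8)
The plan is to mirror the channel-coherence computation in \eqref{eq:CsiCoherenceTime1}, starting from the already-reduced functional $J_{\mathrm{DCT}}(t_{\mathrm c})=\frac1B\bigl|\mathbb{E}_{\varphi}\{\int_{-B/2}^{B/2}e^{j2\pi f\Delta_{\mathrm{TDoA}}}\,\mathrm df\}\bigr|$. First I would collapse the inner integral over the flat spectrum of the $\sinc$ pulse, $\int_{-B/2}^{B/2}e^{j2\pi f\Delta_{\mathrm{TDoA}}}\,\mathrm df=B\,\sinc(\pi B\,\Delta_{\mathrm{TDoA}})$, so that $J_{\mathrm{DCT}}(t_{\mathrm c})=\bigl|\mathbb{E}_{\varphi}\{\sinc(\pi B\,\Delta_{\mathrm{TDoA}})\}\bigr|$. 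Everything then rests on an accurate kinematic model of $\Delta_{\mathrm{TDoA}}$ as a function of the displacement direction $\varphi$, because the naive first-order model was already seen to collapse to $\Delta_{\mathrm{TDoA}}=0$ and must be replaced.

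For that model I would work in the far field and, in the NLoS case of Fig.~\ref{Fig:CoherenceTime}, replace the BS antenna by its mirror image so the incident wavefront is planar with DoA $\theta$. Writing $\tau_n\simeq\frac1c\bigl(d-l_n\cos(\theta-\phi_n)\bigr)$ gives the baseline TDoA $\tau_2-\tau_1=-\frac1c\bigl(l_2\cos(\theta-\phi_2)-l_1\cos(\theta-\phi_1)\bigr)$. The displacement rotates the DoA by $\theta_{\mathrm e}=\theta-\theta'$, and the cleanest route is to write the TDoA change as a projection $\Delta_{\mathrm{TDoA}}=-\frac1c(\hat{\mathbf u}'-\hat{\mathbf u})\cdot(\mathbf a_2-\mathbf a_1)$, with $\hat{\mathbf u},\hat{\mathbf u}'$ the unit DoA vectors before and after the motion. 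Expanding $\hat{\mathbf u}'-\hat{\mathbf u}$ in $\theta_{\mathrm e}$ and regrouping the array terms into $q_1$ and $q_2$ yields $\Delta_{\mathrm{TDoA}}=\frac1c\bigl(q_1(1-\cos\theta_{\mathrm e})-q_2\sin\theta_{\mathrm e}\bigr)$, which is precisely the quantity inside the $\sinc$. This makes explicit that the surviving effect is first order in the DoA rotation \emph{and} first order in the antenna separation---the product that the common-angle approximation throws away.

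The remaining work is to push the randomness of $\varphi$ through to $\theta_{\mathrm e}$ and then to the integration variable. For a displacement $vt_{\mathrm c}$ at angle $\varphi$ toward a source at range $d$ and bearing $\theta$, the transverse component $vt_{\mathrm c}\sin(\varphi-\theta)$ over $d$ gives $\theta_{\mathrm e}\simeq g_{\max}\sin(\varphi-\theta)$ with $g_{\max}=vt_{\mathrm c}/d$; since $\varphi$ is uniform, $\theta_{\mathrm e}$ is arcsine distributed on $[-g_{\max},g_{\max}]$ with density $\bigl(\pi g_{\max}\sqrt{1-\theta_{\mathrm e}^2/g_{\max}^2}\bigr)^{-1}$, exactly as $z=f_{\max}\cos\varphi$ produced the arcsine weight in the CCT derivation. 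I would then substitute $z=\sin\theta_{\mathrm e}$, which converts the $\sinc$ argument into the stated algebraic form $q_1(1-\sqrt{1-z^2})-q_2 z$; transporting the arcsine density through the Jacobian $\mathrm d\theta_{\mathrm e}=\mathrm dz/\sqrt{1-z^2}$ and folding $[0,2\pi)$ onto $[-g_{\max},g_{\max}]$ then delivers the weighted integral with prefactor $1/\pi$ claimed in the statement.

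I expect the kinematic model to be the principal obstacle: one must resist linearizing each delay separately (which cancels) and instead keep the DoA-rotation vector $\hat{\mathbf u}'-\hat{\mathbf u}$ paired with the antenna-separation vector $\mathbf a_2-\mathbf a_1$, and in the NLoS branch verify that the image construction leaves the angular bookkeeping---hence the definitions of $q_1,q_2$---intact. A second delicate point is the change of variables itself: the composite map $\varphi\mapsto\theta_{\mathrm e}\mapsto z=\sin\theta_{\mathrm e}$ is two-to-one on each monotone branch of $[0,2\pi)$, so the arcsine density and the $\sin$-Jacobian must be combined branch-by-branch to land on the exact two-term weight $\tfrac{1}{g_{\max}\sqrt{1-z^2/g_{\max}^2}}+\tfrac{1}{\sqrt{1-z^2}}$ rather than a mere leading-order arcsine factor.
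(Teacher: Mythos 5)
Your skeleton coincides with the paper's: the same reduction of the numerator to $\bigl|\mathbb{E}_{\varphi}\{\sinc(\pi B\,\Delta_{\mathrm{TDoA}})\}\bigr|$, the same exact trigonometric regrouping $\Delta_{\mathrm{TDoA}}=\frac{1}{c}\bigl(q_1(1-\cos\theta_{\mathrm e})-q_2\sin\theta_{\mathrm e}\bigr)$, and the same two-step substitution $\varphi\mapsto\theta_{\mathrm e}\mapsto z=\sin\theta_{\mathrm e}$. The genuine gap sits in the middle step. Your kinematic model $\theta_{\mathrm e}\simeq g_{\max}\sin(\varphi-\theta)$ is a first-order linearization, and from it the theorem's two-term weight is unreachable: it makes $\theta_{\mathrm e}$ exactly arcsine-distributed, and pushing that single-term density through $\mathrm d\theta_{\mathrm e}=\mathrm dz/\sqrt{1-z^2}$ produces, at the order where the model is valid, only the first factor $\tfrac{1}{g_{\max}\sqrt{1-z^2/g_{\max}^2}}$; the additive $\tfrac{1}{\sqrt{1-z^2}}$ never arises. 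Your closing suggestion that branch-by-branch bookkeeping of the two-to-one map rescues the exact weight also fails at your model's order: the two monotone branches of $\varphi\mapsto g_{\max}\sin(\varphi-\theta)$ have identical Jacobian magnitudes $g_{\max}|\cos(\varphi-\theta)|$, so summing branches just reproduces the same arcsine factor. A further symptom of over-linearization is the support: in the exact geometry it is $\sin\theta_{\mathrm e}$, not $\theta_{\mathrm e}$, that is bounded by $g_{\max}$, which is why the stated limits $\pm g_{\max}$ in $z$ come out exactly in the paper, while your model puts $\theta_{\mathrm e}\in[-g_{\max},g_{\max}]$.

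What the paper's proof does differently at this point is keep the displacement geometry exact: from $d'e^{\jmath\theta'}=de^{\jmath\theta}+t_{\mathrm c}v e^{\jmath\varphi}$, the vanishing imaginary part gives $d\sin\theta_{\mathrm e}+t_{\mathrm c}v\sin(\varphi-\theta')=0$, which is inverted as $\varphi=\arcsin\!\bigl(-\tfrac{d}{t_{\mathrm c}v}\sin\theta_{\mathrm e}\bigr)-\theta_{\mathrm e}+\theta$. The explicit $-\theta_{\mathrm e}$ term—exactly the piece your model discards by dropping the $\theta_{\mathrm e}$ hidden inside $\sin(\varphi-\theta')$—contributes an additive $1$ to $\bigl|\partial\varphi/\partial\theta_{\mathrm e}\bigr|$, and that $+1$ in $f_{\theta_{\mathrm e}}$ is precisely what becomes $\tfrac{1}{\sqrt{1-z^2}}$ after substituting $z=\sin\theta_{\mathrm e}$; the paper then restricts $\theta_{\mathrm e}$ to the principal arcsine branch on physical grounds and doubles $\tfrac{1}{2\pi}$ to $\tfrac{1}{\pi}$. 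Be aware, moreover, that carrying out your exact branch-summing idea would not converge to the paper's answer either: on the second arcsine branch the Jacobian is $A-1$ rather than $A+1$ (with $A$ the arcsine factor), so summing branches cancels the $\pm1$ contributions and again leaves a single-term weight—the theorem's exact two-term form hinges on the paper's specific exclude-and-double step. Since $|z|\le g_{\max}\ll1$, the missing term is only an $O(g_{\max})$ relative correction, so your route would establish a leading-order version of the formula (and would even suffice for the lower bound in Lemma~\ref{Lem:LowerBoundJDCT}), but it does not prove the equality stated in Theorem~\ref{Lem:PositionCoherenceTime}.
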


\begin{proof}
See Appendix~\ref{Proof:PositionCoherenceTime}.
\end{proof}

In comparison to the expression of the channel coherence function  (\ref{eq:CsiCoherenceTime}), there are three main differences: the phase noise has no impact on the position coherence time; we obtain $g_{\max}$ instead of $f_{\max}$; and we get $\sinc\Big[\pi \frac{ B}{c} \Big(q_1-q_1\sqrt{1-z^2}-q_2z\Big)\Big]$ instead of $\cos(2\pi z)$. 

\par While the expression in Theorem~\ref{Lem:PositionCoherenceTime} characterizes the DCT coherence function, it does not yield direct analytical insight into the DCT scaling behavior. To bridge this gap and facilitate a insightful comparison with the CCT, we derive a tractable lower bound on $J_{\mathrm{DCT}}(t_c)$ that explicitly captures its dependence on system parameters. Without loss of generality, we consider the symmetric case $l_1 = l_2 = l$, where the receive antennas are equidistant from the propagation center. This assumption is naturally satisfied in the two-antenna case and simplifies the analysis without compromising generality.

\begin{lem}\label{Lem:LowerBoundJDCT}
The direction coherence function admits the following lower bound:
\begin{equation}
    J_{\mathrm{DCT}}(t_{\mathrm{c}}) \geq \left| J_0\left(2\pi \frac{lB}{c} g_{\max} \right) \right|,
\end{equation}
where $g_{\max} = \frac{t_{\mathrm{c}} v}{d}$ and $J_0(\cdot)$ is the zeroth-order Bessel function of the first kind.
\end{lem}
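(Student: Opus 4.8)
The plan is to start from the exact coherence function in Theorem~\ref{Lem:PositionCoherenceTime}, specialize it to the symmetric array $l_1=l_2=l$, and then degrade it through a chain of one-sided inequalities into the Bessel form claimed in the lemma. First I would rewrite $q_1,q_2$ with the sum-to-product identities, obtaining $q_1=-2l\sin(\theta-\bar\phi)\sin\delta$ and $q_2=2l\cos(\theta-\bar\phi)\sin\delta$ with $\bar\phi=(\phi_1+\phi_2)/2$ and $\delta=(\phi_1-\phi_2)/2$, so that $q_1^2+q_2^2=4l^2\sin^2\delta$ and in particular $|q_2|\le 2l$. This is the only place the geometry enters, and it is precisely what produces the factor of two inside the target argument $2\pi\tfrac{lB}{c}g_{\max}$.

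Next I would exploit the small-displacement regime $g_{\max}=t_{\mathrm{c}}v/d\ll1$. On the window $z\in[-g_{\max},g_{\max}]$ the term $1-\sqrt{1-z^2}=O(z^2)$ is second order while $q_2 z$ is first order, so neglecting it reduces the $\sinc$ argument to $-\pi\tfrac{B}{c}q_2 z$, and by evenness of $\sinc$ the integrand becomes $\sinc[\pi\tfrac{B}{c}|q_2|z]$. I would then restrict to the physically meaningful coherence window $2\pi\tfrac{lB}{c}g_{\max}\le j_{0,1}\approx2.405<\pi$ (operation before the first zero of $J_0$, where decorrelation has not yet set in). Under this restriction the $\sinc$ argument stays in $[0,\pi]$ across the whole window, hence $\sinc[\cdot]\ge0$ there; consequently the contribution of the positive second density term $\tfrac{1}{\sqrt{1-z^2}}$ is nonnegative, which lets me both discard it and remove the outer modulus, leaving $J_{\mathrm{DCT}}(t_{\mathrm{c}})\ge\tfrac1\pi\int_{-g_{\max}}^{g_{\max}}\sinc[\pi\tfrac{B}{c}|q_2|z]\,[g_{\max}\sqrt{1-z^2/g_{\max}^2}]^{-1}\,\mathrm{d}z$.

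The core estimates come next, and both rest on the single auxiliary function $g(x)=\sin x-x\cos x$, for which $g(0)=0$ and $g'(x)=x\sin x\ge0$ on $[0,\pi]$, so $g\ge0$ there. This yields at once that $\sinc$ is decreasing on $[0,\pi]$ (its derivative is $-g(x)/x^2$) and that $\sinc x\ge\cos x$ on $[0,\pi]$ (their difference is $g(x)/x$). Since $\pi\tfrac{B}{c}|q_2||z|\le2\pi\tfrac{lB}{c}|z|\le\pi$ on the window, monotonicity lets me enlarge the argument to its worst case, $\sinc[\pi\tfrac{B}{c}|q_2|z]\ge\sinc[2\pi\tfrac{lB}{c}|z|]$, and the domination inequality then gives $\sinc[2\pi\tfrac{lB}{c}|z|]\ge\cos(2\pi\tfrac{lB}{c}z)$. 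Recognizing the standard representation $J_0(a)=\tfrac1\pi\int_{-g_{\max}}^{g_{\max}}\cos(\tfrac{a}{g_{\max}}z)\,[g_{\max}\sqrt{1-z^2/g_{\max}^2}]^{-1}\,\mathrm{d}z$ with $a=2\pi\tfrac{lB}{c}g_{\max}$ collapses the right-hand side to $J_0(2\pi\tfrac{lB}{c}g_{\max})$, which coincides with its own modulus on $[0,j_{0,1}]$, completing the bound.

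The hard part will be keeping every inequality pointing the same way, since the two monotonicity steps pull in opposite directions: enlarging $|q_2|$ to $2l$ shrinks the $\sinc$ (admissible for a lower bound only because $\sinc$ is decreasing on $[0,\pi]$), whereas replacing $\sinc$ by $\cos$ must not overshoot. Both are legitimate only inside the first Bessel lobe, so the genuine content of the argument is the range control $2\pi\tfrac{lB}{c}g_{\max}\le\pi$; this is also what justifies discarding the $O(z^2)$ term $q_1(1-\sqrt{1-z^2})$, whose neglected contribution is an $O(g_{\max})$ relative correction that vanishes in the small-displacement limit underpinning the entire analysis.
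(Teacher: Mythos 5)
Your proof is correct on the first-lobe window and shares the skeleton of the paper's Appendix~C proof: specialize to $l_1=l_2=l$, bound $|q_1|,|q_2|\le 2l$, dominate the $\sinc$ argument by a linear-in-$z$ worst case using monotonicity of $\sinc$ on $[0,\pi]$, pass to a cosine, discard the positive $1/\sqrt{1-z^2}$ density term, and identify the arcsine-weighted cosine integral with $J_0$. The genuine difference is where the one soft step sits. The paper keeps the second-order term rigorously, using $1-\sqrt{1-z^2}\le|z|$ to get the worst-case argument $4\pi\tfrac{lB}{c}|z|$, but must then bridge $\sinc[4\pi z]$ to $\cos[2\pi z]$ with a ``$\simeq$'' that is an approximation, not a valid pointwise inequality (e.g.\ $\sinc[\pi/2]=2/\pi<\cos[\pi/4]$). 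You instead drop $q_1(1-\sqrt{1-z^2})=O(z^2)$ up front, which leaves the worst-case argument $2\pi\tfrac{lB}{c}|z|$, after which every remaining step is an honest inequality: your single auxiliary function $g(x)=\sin x - x\cos x\ge 0$ on $[0,\pi]$ delivers both the monotonicity of $\sinc$ and the domination $\sinc x\ge\cos x$, which is exactly the inequality the paper's $\simeq$ stands in for. Your one approximation is moreover repairable at no cost: using $1-\sqrt{1-z^2}\le z^2\le g_{\max}|z|$ on the window yields the fully rigorous $J_{\mathrm{DCT}}(t_{\mathrm c})\ge\left|J_0\left(2\pi\tfrac{lB}{c}(1+g_{\max})\,g_{\max}\right)\right|$, recovering the stated bound up to an $O(g_{\max})$ argument dilation, whereas the paper's coarse factor $4l|z|$ (loose by nearly $2\times$ for small $g_{\max}$) is what forces its non-monotone halving step. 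Two smaller points also favor your version: you make the range condition $2\pi\tfrac{lB}{c}g_{\max}\le j_{0,1}<\pi$ explicit, while the paper's sinc-monotonicity step silently needs the stronger $4\pi\tfrac{lB}{c}g_{\max}\le\pi$ (both hold at the $1/\sqrt{2}$ coherence threshold where the lemma is applied); and you discard the $1/\sqrt{1-z^2}$ term while the integrand is still a nonnegative $\sinc$, whereas the paper drops it after the cosine substitution inside absolute values, where $|A+B|\ge|A|$ requires a sign-alignment argument it never supplies. Your sum-to-product expansion of $q_1,q_2$ is correct but unnecessary — the triangle inequality already gives $|q_2|\le 2l$ — and, as you note, your proof (like the paper's) establishes the bound only on the first-lobe window rather than for all $t_{\mathrm c}$, which is a defect of the unqualified lemma statement rather than of your argument.
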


\begin{proof}
See Appendix~\ref{Proof:lemJDCTLowerBound}.
\end{proof}

\par The lower bound on $J_{\mathrm{DCT}}(t_{\mathrm c})$ mirrors the CCT expression in (\ref{eq:CsiCoherenceTime}), with $\frac{lB}{c}g_{\max} = \frac{lB}{c} \cdot \frac{t_{\mathrm c}v}{d}$ replacing $f_{\max} = \frac{t_{\mathrm c}v}{\lambda}$. Applying the standard coherence threshold $|J(t_c)|^2 \geq 0.5$ \cite{Car:87},\cite[Ch. 5]{Rap:B02}, we obtain:
\begin{equation}
	\begin{cases}
		t_{\mathrm DCT}\geq\frac{1}{2\pi}J_0^{-1}(\frac{1}{\sqrt{2}})\frac{d}{v} \frac{c}{lB}=\frac{9}{16\pi}\frac{d}{v} \frac{c}{lB}\\
		t_{\mathrm {CCT}}\simeq\frac{1}{2\pi}\frac{\lambda}{v} J_0^{-1}(\frac{1}{\sqrt{2}}\sqrt{\frac{B}{B-|\Delta f-\Delta f'|}})\leq \frac{9}{16\pi}\frac{\lambda}{v} 
	\end{cases}.
\end{equation}
Thus, the DCT scales with the BS–MD distance $d$, while the CCT scales with $\lambda$, yielding a coherence gain of
\begin{equation}
\frac{t_{\mathrm{DCT}}}{t_{\mathrm{CCT}}} \geq \frac{d}{\lambda} \cdot \frac{c}{lB}.
\end{equation}
Thus, DSK achieves multiple orders of magnitude longer coherence than SSK and amplitude-phase modulation under typical mmWave conditions. 

\begin{remark}Modern mobile devices are increasingly equipped with auxiliary sensing tools such as GPS, accelerometers, and digital compasses. These sensors provide side information—including velocity, heading direction, and coarse location estimates—that can be exploited to track positional changes with high reliability. Leveraging such information can substantially extend the effective Direction Coherence Time (DCT), as the receiver can compensate for mobility-induced shifts in TDoA. The present analysis excludes this auxiliary data, thereby representing a conservative lower bound on DCT performance.
\end{remark}

\section{Simulation Results}\label{Sec:SimulationResults}
In this section, we analyze the performance of DSK and compare it to SSK as a benchmark. We consider two simulation setups. In the first, we study a circular coverage cell and evaluate the SER in the presence of mobility and phase noise. In the second, a more realistic highway Road Side Unit (RSU) downlink captures close-to-reality operation of a high-mobility vehicle, implemented as an advanced version of the simulation in \cite{Mohaned}. In the RSU case, we mimic real operation by estimating the channel over a noisy link and measuring the associated overhead as a function of that estimation process. Accordingly, both the reported overhead and the channel estimates are derived from noisy pilot measurements, and the oscillator phase is modeled as a Wiener process \cite{JPhaseNoiseWiener}.

%both the overhead and the channel estimates reflect noisy observations, and oscillator phase noise is modeled as a Wiener (random–walk) phase error .

\subsection{The case of circular coverage}
We consider a circular coverage area with 100\,m radius. The BS antennas are uniformly deployed along the circumference. Transmissions occur at a carrier frequency of 30\,GHz with a bandwidth of 100\,MHz. The MD is equipped with $N = 7$ receive antennas, arranged in a circular array of radius 0.1\,m with uniform angular spacing of $2\pi/7$. We perform Monte Carlo simulations and average over multiple independent realizations. The optimal detector from Lemma~\eqref{eq:Lem1} is adopted. The receiver experiences additive white Gaussian noise. The MD moves at a constant speed of 30\,km/h in a random direction.
\begin{figure}
		\includegraphics[scale=0.5]{./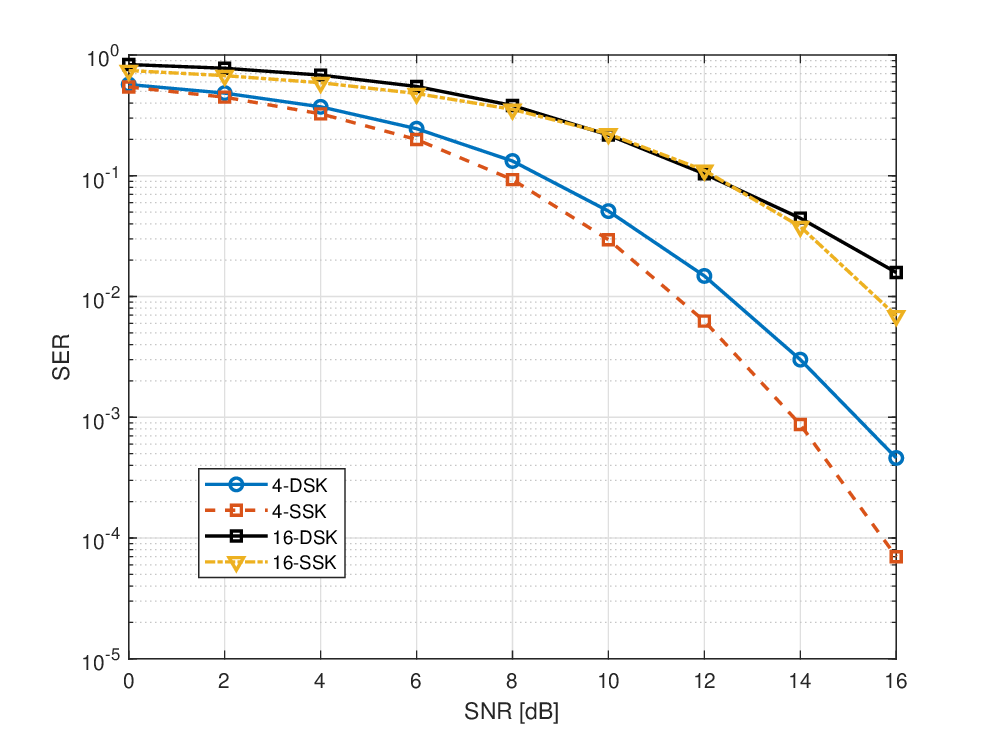}
		\caption{Symbol error rate  vs SNR (dB)}
		\label{Fig:SERSNR}
 \end{figure}

\begin{figure}[t]
 \centering
   \hspace*{-0.8cm} 
    \includegraphics[scale=0.45]{./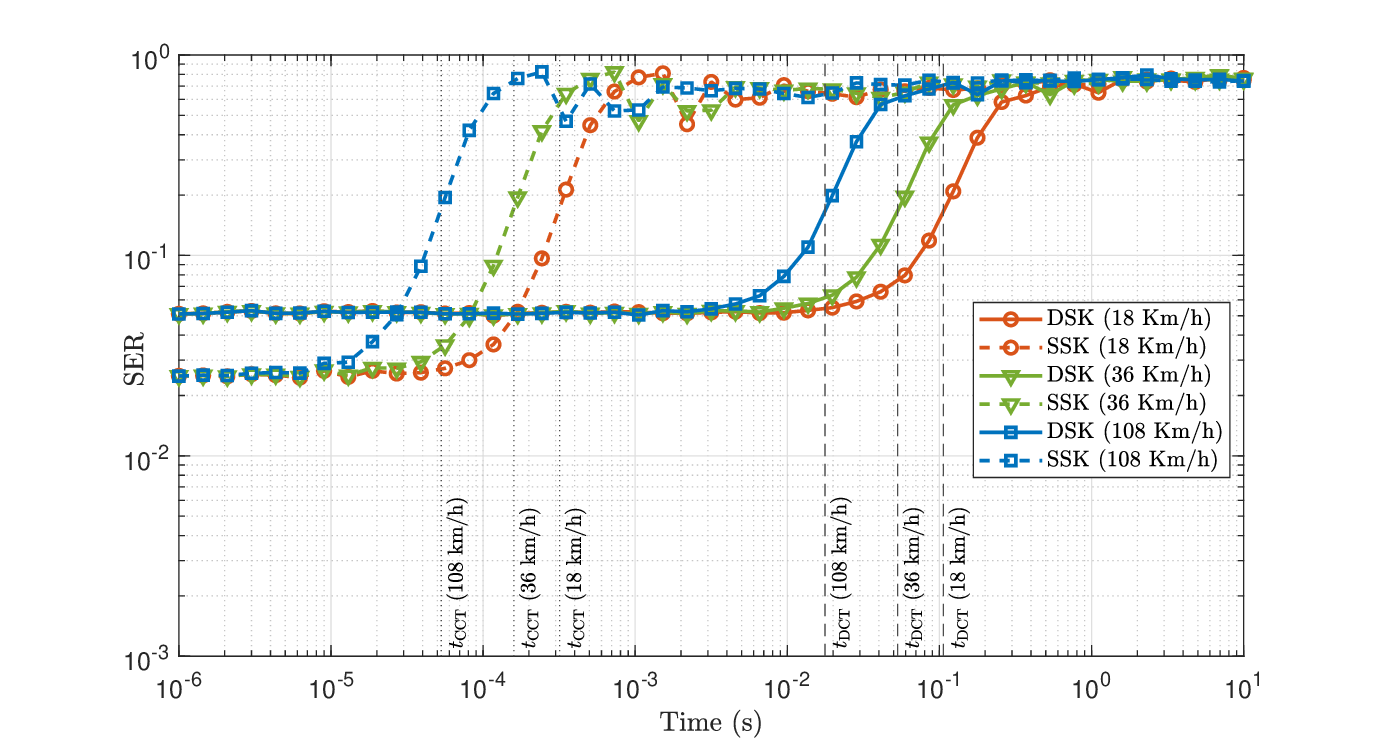}
    \caption{SER versus displacement time $t_c$ (seconds).}
    \label{fig:SERtc}
\end{figure}
\subsubsection{On the Effect of the Channel Coherence Time}

We benchmark the proposed DSK against conventional SSK for different modulation order $M\!\in\!\{4,16\}$ with perfect CSI for SSK and no phase noise. Fig.~\ref{Fig:SERSNR} reports SER versus SNR. For $M=4$, SSK attains the lowest SER at high SNR because it coherently exploits the full complex channel vector, whereas DSK operates in the DoA subspace only. As $M$ increases to $16$, the inter–template distances shrink and the SSK advantage diminishes; the DSK curve becomes essentially indistinguishable from SSK and can be marginally better at the highest SNRs for this geometry. Overall, DSK tracks SSK closely in the static, phase–aligned regime, with the gap narrowing as $M$ grows.

We next study the sensitivity of SER to the elapsed time \(t_{\mathrm{c}}\) since the last update, with all channel state references held fixed. Fig.~\ref{fig:SERtc} plots SER versus $t_{\mathrm c}$ for $M=4$ at three user speeds: $v=5$~m/s, 10~m/s, and 30~m/s (corresponding to 18, 36, and 108~km/h). As soon as $t_{\mathrm c}$ exceeds the channel coherence time
\begin{equation*}
t_{\mathrm{CCT}} \;\approx\; \frac{\lambda}{2\pi v}
\;\simeq\; 1.9\times 10^{-4}\ \text{s}\quad \text{at } f_c=30~\text{GHz},\ v=30~\text{km/h},
\end{equation*}
the SSK curves degrade precipitously, reflecting the impact of untracked CSI. In sharp contrast, DSK remains essentially flat over several decades of $t_{\mathrm c}$ and only begins to deteriorate near the directional--coherence time
\begin{equation*}
t_{\mathrm{DCT}} \;\geq\; \frac{d\lambda}{2\pi v L},
\end{equation*}
which, for the layout considered ($L=0.1$\,m, $v=30$\,km/h, $f_c=30$\,GHz), occurs beyond $\sim\!0.3$\,s. 

This means that the reference signal for coherent detection in DSK needs to be updated only at intervals that are roughly three orders of magnitude longer than the CCT, i.e., at a rate three orders lower than required for SSK. Consequently, DSK sustains coherent detection without frequent CSI updates, highlighting its inherent immunity to common--phase drift and its suitability for mobility.

\subsubsection{On the Effect of Phase Noise}
\begin{figure}
\centering
   \hspace*{-0.8cm} 
		\includegraphics[scale=0.45]{./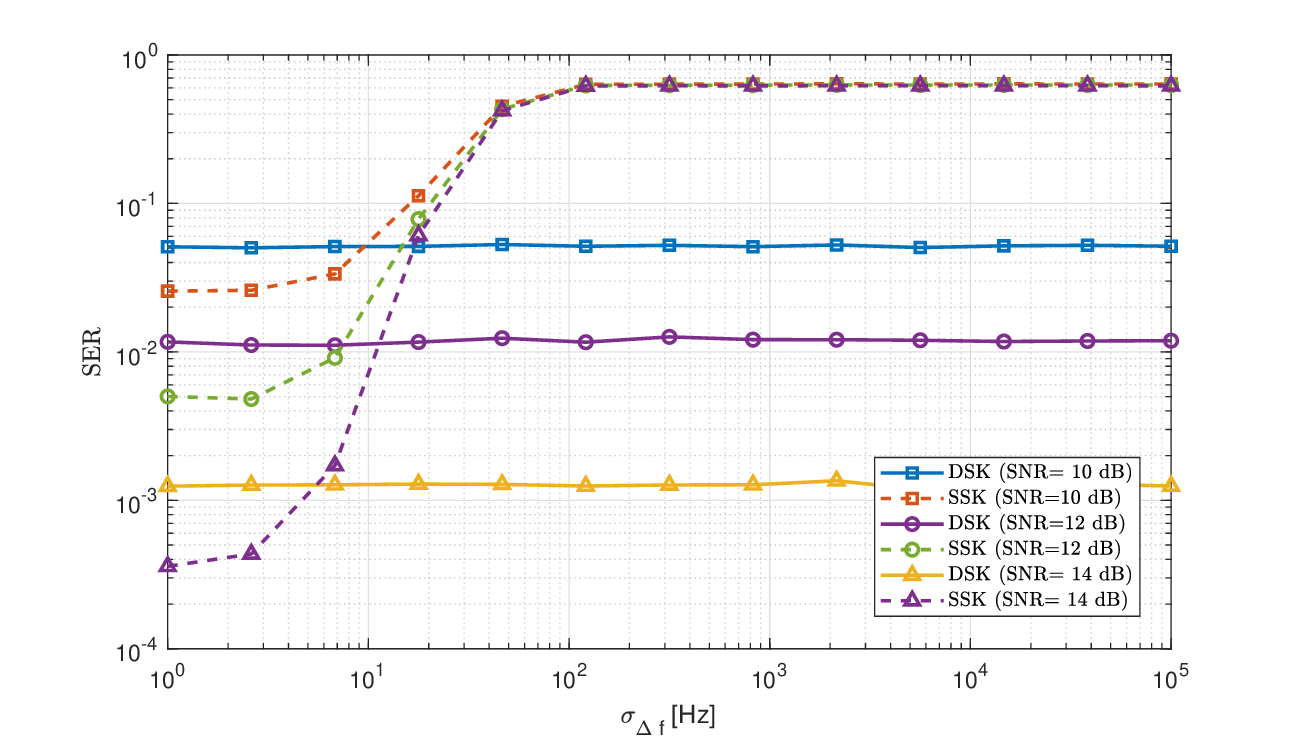}
		\caption{Symbol error rate  vs phase noise standard deviation}
		\label{Fig:ErrorRatePhaseNoise}
         \vspace{-0.4cm}
 \end{figure}
We next analyze robustness against oscillator phase noise, which is a critical impairment in high--frequency systems. 
Fig.~\ref{Fig:ErrorRatePhaseNoise} plots SER versus the phase--noise standard deviation $\sigma_{\Delta f}$ for three SNR operating points (10, 12, and 14~dB). 
The results are unambiguous: SSK suffers dramatically from phase noise, with its SER floor persisting regardless of how much the SNR is increased. This demonstrates that phase noise cannot be ``averaged out'' by higher power, i.e., it represents a fundamental bottleneck for SSK. 

In sharp contrast, the SER of DSK remains essentially flat across five decades of $\sigma_{\Delta f}$ and at all tested SNRs. This establishes DSK as intrinsically phase--noise proof: its performance is unaffected by oscillator instabilities. Together with the previous observation that DSK sustains coherent detection for update intervals three orders of magnitude longer than the channel coherence time, these results decisively show that DSK eliminates the two dominant barriers to coherent detection in mobile mmWave and sub-THz systems: 
(i) the need for extremely frequent CSI updates, and 
(ii) vulnerability to oscillator phase noise.  Hence, DSK offers a fundamentally more robust foundation for mobility than conventional SSK, enabling coherent performance under mobility and hardware impairment where SSK would collapse.
\begin{table*}[t]
\centering
\scriptsize
\setlength{\tabcolsep}{6pt}
\renewcommand{\arraystretch}{1.15}
\begin{tabular}{lrrrrrrrr}
\hline
\textbf{Pilot overhead [\%]} &
\textbf{44} & \textbf{19} & \textbf{6.42} & \textbf{2} & \textbf{0.50} & \textbf{0.18} & \textbf{0.07} & \textbf{0.02} \\
\hline
\textbf{SER (DSK)} &
$1.27\times10^{-2}$ & $1.29\times10^{-2}$ & $1.02\times10^{-2}$ & $1.29\times10^{-2}$ &
$1.29\times10^{-2}$ & $1.06\times10^{-2}$ & $9.80\times10^{-3}$ & $1.22\times10^{-2}$ \\
\textbf{SER (SSK)} &
$9.00\times10^{-4}$ & $3.76\times10^{-2}$ & $2.50\times10^{-1}$ & $3.17\times10^{-1}$ &
$3.29\times10^{-1}$ & $3.28\times10^{-1}$ & $3.30\times10^{-1}$ & $3.41\times10^{-1}$ \\
\hline
\end{tabular}
\caption{SER versus overhead percentage.}
\label{tab:ser_vs_overhead_repr}
\vspace{-0.5cm}
\end{table*}
%%All results are averaged over independent Monte Carlo realizations of geometry, channels, and noise.

%(≈\(-90\,\mathrm{dBm}\) over \(100\,\mathrm{MHz}\), consistent with thermal noise \(\approx-94\,\mathrm{dBm}\) plus a modest noise figure)

\subsection{The case of RSU with overhead estimation }

We consider a highway RSU downlink intended to mirror deployment practice and analyze the efficacy of DSK under high mobility. The road centerline is modeled as a piecewise-linear trajectory; RSUs are placed every \(100\,\mathrm{m}\) with a fixed lateral offset of \(10\,\mathrm{m}\). The mobile device carries a \(N{=}5\) element Uniform Linear Array (ULA) with \(0.5\lambda\) spacing. Channel references are obtained via pilot-aided estimation over a noisy channel: at each segment entry and then periodically, the receiver collects \(N_p{=}4\) pilots per transmitter and forms references by averaging the received pilot vectors; for SSK the reference is the averaged complex \(N\)-vector, whereas for DSK it is a phase-only, amplitude-invariant feature obtained by de-rotating to the first antenna, normalizing magnitudes element-wise, and \(\ell_2\)-normalizing the \((N{-}1)\)-length tail. Updates are attempted every \(U=\lceil T_{\mathrm{upd}}/T_s\rceil\) symbols with symbol period \(T_s=1\,\mu\mathrm{s}\). The pilot-overhead ratio is reported  empirically as \(\tfrac{2N_p}{U+2N_p}\) as a function of \(T_{\mathrm{upd}}\) and $T_s$. Unlike the circular-cell baseline that normalizes SNR at the receiver, here we fix  transmit power \(P_{\mathrm{tx}}\) and per-antenna noise variance \(\sigma^2=10^{-12}\,\mathrm{W}\). Propagation follows free-space path loss, yielding a spatially varying SNR along the road (we also plot the SNR profile for reference). Oscillator phase noise is modeled as a common-phase-error Wiener (random-walk) process that updates every symbol period and is characterized by the standard deviation of the oscillator linewidth denoted by $\sigma_{\Delta f}$ \cite{JPhaseNoiseWiener}. 

\begin{figure}[h]
 \centering
   \hspace*{-0.8cm} 
    \includegraphics[scale=0.45]{./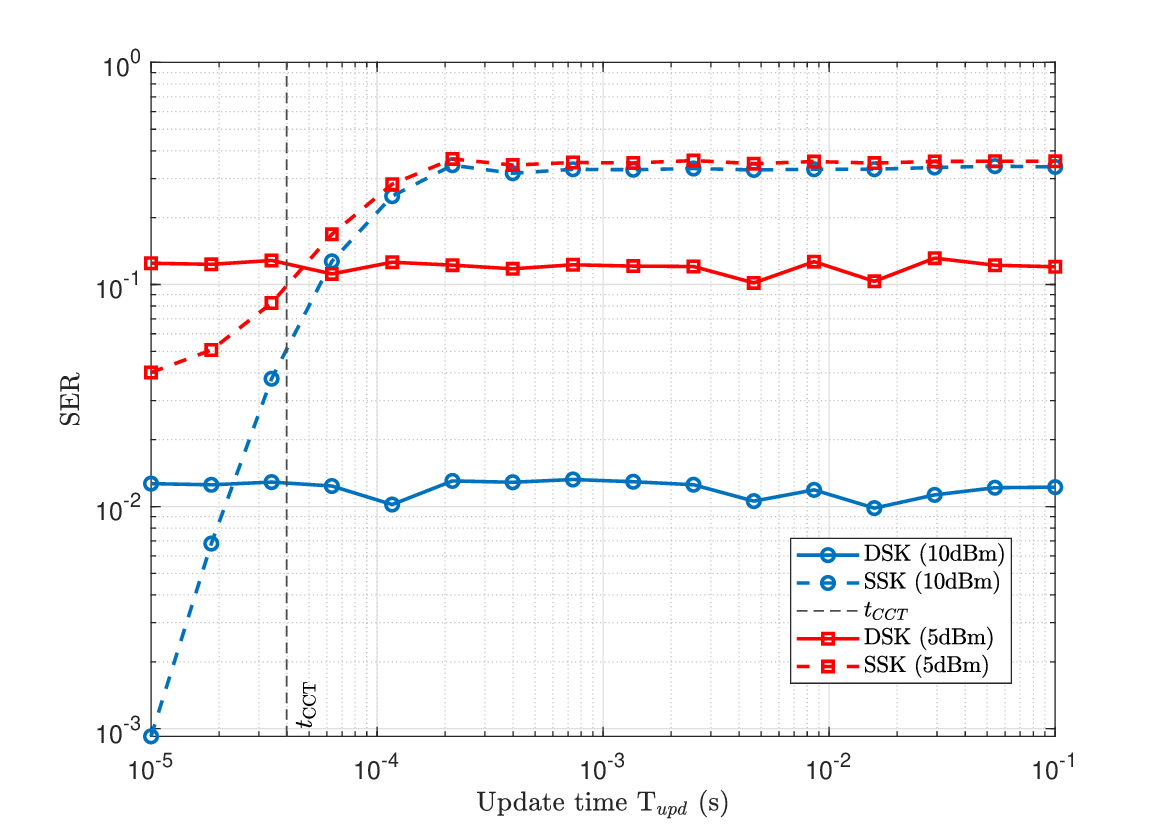}
    \caption{SER as function of the pilot symbol periods (s).}
    \label{fig:SERvsTupd}
\end{figure}
In Fig.~\ref{fig:SERvsTupd} we analyze SER versus the update time \(T_{\mathrm{upd}}\) for two transmit powers (\(P_{\mathrm{tx}}\in\{5,10\}\,\mathrm{dBm}\)). At a 30\,GHz carrier, the free-space baseband gain is \(|a|=\lambda/(4\pi d)\), so the per-antenna received power is \(P_{\mathrm{rx}}=P_{\mathrm{tx}}|a|^2\). With \(P_{\mathrm{tx}}=5\,\mathrm{dBm}=3.162\times10^{-3}\,\mathrm{W}\), \(d=50\,\mathrm{m}\), and noise variance \(\sigma^2=10^{-12}\,\mathrm{W}\), we obtain
\[
\mathrm{SNR}=\frac{P_{\mathrm{rx}}}{\sigma^2}=0.801\;\;(\approx -0.96~\mathrm{dB}).
\]
With \(N\) antennas, one add \(10\log_{10}N\) dB, e.g., \(+7\) dB for \(N=5\) yields a received SNR of \(\approx 6.0\) dB at \(50\,\mathrm{m}\). Since the vehicle is moving and the distance form the RSUs changes and hence the SNR.

With \(T_s=10^{-6}\) s, the pilot symbols are refreshed every \(U=\lceil T_{\mathrm{upd}}/T_s\rceil\) symbols. The pilot overhead decreases as \(T_{\mathrm{upd}}\) grows; the corresponding overhead ratios are reported in Table~\ref{tab:ser_vs_overhead_repr}. We observe that SSK maintains low error only when the overhead is high, about \(44\%\) down to \(\sim 20\%\), and beyond this range its SER rapidly deteriorates. In contrast, DSK remains essentially flat (on the order of \(10^{-2}\)) across more than three decades of \(T_{\mathrm{upd}}\); notably, the overhead can be driven down to \(\approx 0.02\%\) with no loss for DSK. These trends align with prior reports that coherent schemes under high mobility may require $20\%$ to $30\%$ channel training overhead to preserve performance \cite{ChiVenVal:3,ChiVenVal:1}.

\begin{figure}[h]
 \centering
 \hspace*{-0.8cm}
 \includegraphics[scale=0.45]{./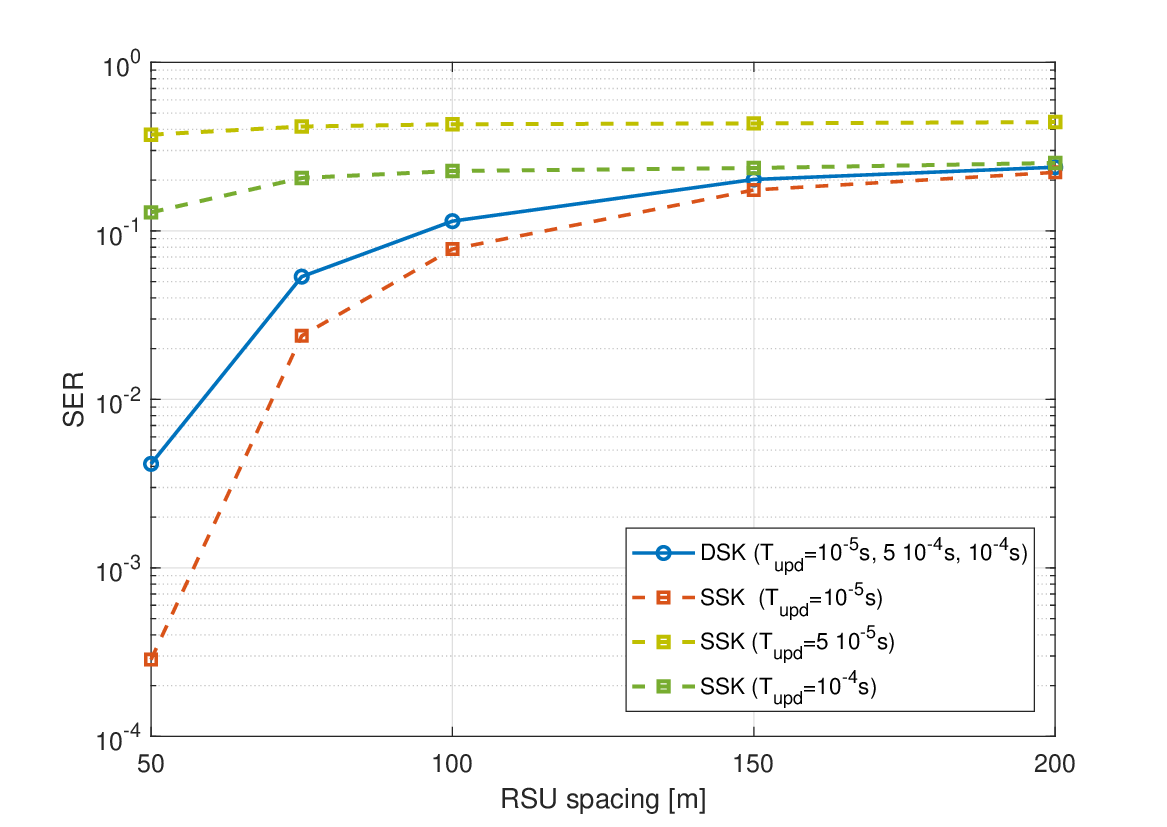}
 \caption{SER versus RSU inter-distance (meters) at fixed phase noise; \(P_{\mathrm{tx}}=12\,\mathrm{dBm}\).}
 \label{fig:RSUdistance}
  \vspace{-0.3cm}
\end{figure}

In Fig.~\ref{fig:RSUdistance} we analyze the performance as a function of the RSU inter-distance for several pilot update rates (i.e., $T_{\mathrm{upd}}$). For $P_{\mathrm{tx}}=12\,\mathrm{dBm}$ and an inter-RSU spacing of $50\,\mathrm{m}$, SSK outperforms DSK when updates are sufficiently frequent. As the spacing increases, the gap narrows. Moreover, as the pilot rate decreases (i.e., $T_{\mathrm{upd}}$ increases), DSK remains largely unaffected, whereas SSK deteriorates markedly.

\begin{figure}[h]
 \centering
 \hspace*{-0.8cm}
 \includegraphics[scale=0.45]{./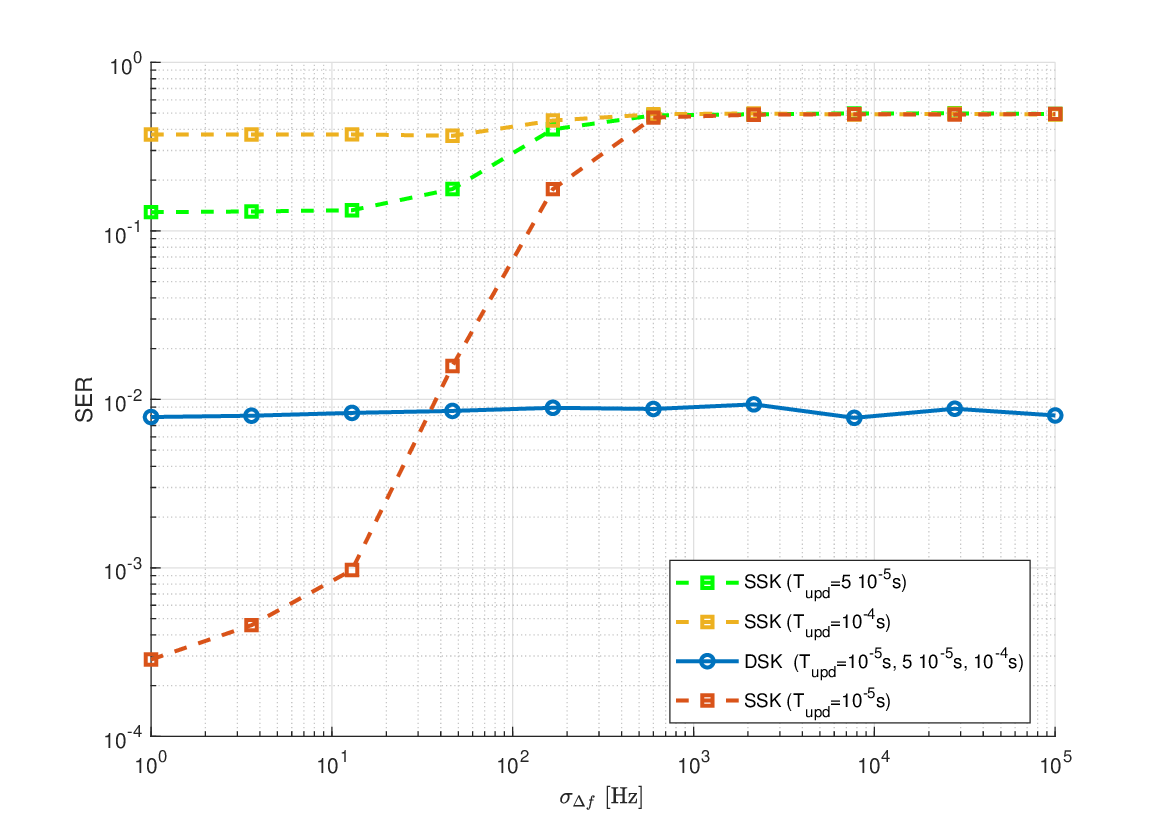}
 \caption{SER versus phase-noise level \(\sigma_{\Delta f}\) for multiple \(T_{\mathrm{upd}}\); \(P_{\mathrm{tx}}=12\,\mathrm{dBm}\), \(d=50\,\mathrm{m}\).}
 \label{fig:RSUphasenoise}
  \vspace*{-0.2cm}
\end{figure}

To jointly visualize the effects of update time and phase noise, Fig.~\ref{fig:RSUphasenoise} plots SER versus the phase-noise standard deviation for multiple pilot update rates. DSK is robust to both phase noise and update time, while SSK is sensitive to increases in either. For example, once the phase-noise standard deviation exceeds $\sim\!100\,\mathrm{Hz}$ (with $T_{\mathrm{upd}}=0.01\,\mathrm{ms}$), SSK performance degrades rapidly toward the random-decision floor ($\mathrm{SER}\!\approx\!0.5$).

\section{Conclusions}\label{Sec:Conclusions}
This work presents the first analytical performance characterization of DSK, a modulation technique designed to overcome the two fundamental impairments of mmWave and sub-THz communications: oscillator phase noise and short channel coherence time. We derived the optimal detector for DSK in mobile settings and established the notion of DCT, showing that it scales with antenna separation over velocity, $d/v$, as opposed to conventional channel coherence time which scales with $\lambda/v$. This results in a coherence gain on the order of $d/\lambda$, i.e., several orders of magnitude in practical regimes.

Our findings demonstrate that DSK enables coherent detection over significantly longer time scales and exhibits inherent robustness to phase noise—requiring neither channel tracking nor phase correction. These properties substantially reduce pilot overhead and improve spectral efficiency in high-mobility, high-frequency environments. The analysis provides a theoretical foundation for DSK and establishes it as a promising candidate for mobility-resilient and hardware-efficient mmWave communication.

Although the proposed approach offers an innovative and efficient solution to phase noise and rapid channel variations, it necessitates some changes in the system design. In this new setup, we expect the BS antennas to no longer be centrally located within the coverage area. Instead, the optimal configuration involves transmitting antennas around the perimeter of the coverage area. This arrangement increases the distance between transmitting antennas, enhancing the distinguishability of the transmitted symbols. Therefore, future work has to investigate the system planning, especially in a densely deployed network.

\appendices
\section{Proof Of Lemma \ref{Lem:DetectorTDoA}}\label{Proof:DetectorTDoA}
The expressions in (\ref{eq:ML}) and (\ref{eq:DetectorMLTDoA}) together define the form of the optimal TDoA detector. Further developing the detector expression yields to an expression for $\tilde{m}$ as
\newcommand\independent{\protect\mathpalette{\protect\independenT}{\perp}}
\def\independenT#1#2{\mathrel{\rlap{$#1#2$}\,\mkern2mu{#1#2}}}
\begin{align} \label{eq:MLanalysis}
%\begin{aligned}
&\underset{{m}\in\{{1} \dotsc {M}\}}{\operatorname{argmax}} \prod_{n=1}^{N}f_{\rv{r}_n|\RV{\Delta}\RV{\tau}, \rv r_{1\to n-1}}(r_{n}|\V{\Delta}\V{\tau}^{m}_{n,1\to n-1},\V{r}_{1\to n-1})\nonumber\\
 &=\underset{{m}\in\{{1} \dotsc {M}\}}{\operatorname{argmax}}\log\Big\{\underbracee{\independent{m}}{f_{\rv{r}_1}(r_1)}\Big\}\nonumber\\
 &\qquad+\sum_{n=2}^{N}\log\big\{f_{\rv{r}_n|\RV{\Delta}\RV{\tau}, \rv r_{1\to n-1}}(r_{n}|\V{\Delta}\V{\tau}^{m}_{n,1\to n-1},\V{r}_{1\to n-1})\big\}\nonumber\\
% =&\underset{{m}\in\{{1} \dotsc {M}\}}{\operatorname{argmax}}\sum_{i=2}^{N}\log\Big\{\mathbb{P}(r_i|\tau^m_2-\tau^m_1,\dotsc,\\&\tau^m_i-\tau^m_1, \V{r^{-i}_{{v}}})\Big\}\\
  =&\underset{{m}\in\{{1} .. {M}\}}{\operatorname{argmax}}\sum_{n=2}^{N}\log\left\{f_{\rv{r}_n|\RV{\Delta}\RV{\tau}, \rv r_{1\to n-1}}(r_{n}|\V{\Delta}\V{\tau}^{m}_{n,1\to n-1},\V{r}_{1\to n-1})\right\}
 \end{align}
%\end{equation}

Here, the symbol $\independent$ indicates that $f_{\rv{r}_1}$ is not dependent of the index $m$. In the above derivations, we use the fact that maximizing a positive likelihood function is equivalent to maximizing its logarithmic. In the following analysis, we start by deriving the closed-form expression for the first term in the sum $f_{\rv{r}_2|\RV{\Delta}\RV{\tau}, \rv r_{1}}(r_2|\Delta\tau^m_{2,1},{r}_1)$.

\par Assuming that the hypothesis $H_m$ is true, we have
\[\begin{cases}
\rv{r}_1(t)\sim \mathcal{CN}(\alpha^m_1 s(t-t_0-\tau^m_1),\sigma^2 )\\
\rv{r}_2(t)\sim \mathcal{CN}(\alpha^m_2 s(t-t_0-\tau^m_2),\sigma^2 )
\end{cases}.\]
% $\rv{r}_1$ and $\rv r_{2}$ are the results of transmitted by the same antenna. The difference lies in the experienced \ac{AWGN} and the propagation delays. One can hence extract a mathematical expression bonding the two received signals, by exploiting \ac{AWGN} proprieties and compensating the time lag (TDoA). Assuming that antenna with index ${m}$ transmits the signal, we have

%$\mathbb{P}(r_{{v},{2i}}|\tau_{{m},{2i}}-\tau_{{m},{2i-1}},r_{{v},{2i-1}})$ that is functions of teh received signals. The received signals are distorted by \ac{AWGN}. Moreover, the signals experience different delays. On can hence extract a mathemtical expression bond two received, by exploiting \ac{AWGN} proprieties and compensate the time lag (TDoA). In fact, we have

%\begin{eqnarray}
%&\rv r_2&=\alpha_{{v},{2i}}s(t-t_0-\tau^m_2)+w_2(t)\\
%&\rv{r}_1&=\alpha_1s(t-t_0-\tau^m_1)+w_1(t),
%\end{eqnarray}
%which give
 These imply that
\begin{equation}\label{eq:mean}
\begin{aligned}
\mathbb{E}_{\rv w_2}\{\rv r_2(t)\}&=\alpha_2 s(t-t_0-\tau^m_2)\\
&=\frac{\alpha^m_2}{\alpha^m_1}\Big\{\rv{r}_1(t-(\underbrace{\tau^m_2-\tau^m_1}_{\Delta\tau^m_{2,1}}))-\rv w_1(t-(\underbrace{\tau^m_2-\tau^m_1}_{\Delta\tau^m_{2,1}})\Big\}.
\end{aligned}
\end{equation}
%For notation simplicity, we use $r^{{Tm}}_{{v},{1}}$ and $w^{{Tm}}_1$ to denote respectively $\rv{r}_1$ and $w_1$ after compensating the time lag (TDoA) with $\rv{r}_1$ that is equal to $\tau^m_1-\tau^m_2$.
Accordingly, we have %we continue analyzing the likelihood function while bearing in mind that $\rv r_2$ is Gaussian with the mean expressed in (\ref{eq:mean}),
\begin{equation}\label{eq:likelihood}
\begin{aligned}
&f_{\rv{r}_2|\RV{\Delta}\RV{\tau}, \rv r_{1}}(r_2|\Delta\tau^m_{2,1},{r}_1)
\\&=\int f_{\rv{r}_2|\RV{\Delta}\RV{\tau}, \rv r_{1}, \rv w_1}(r_2|\Delta\tau^m_{2,1},{r}_1, w_1) f_{\rv w_1}(w_1) \, d w_1
\\& \coloneqq
\mathbb{E}_{\rv w_1}\{f_{\rv{r}_2|\RV{\Delta}\RV{\tau}, \rv r_{1}, \rv w_1}(r_2|\Delta\tau^m_{2,1},{r}_1, \rv w_1)\}\\
&=\mathbb{E}_{\rv w_1}\Big\{\frac{1}{\sqrt{2\pi\sigma^2}}\exp\Big[-\frac{1}{2\sigma^2}\int|r_2(t)-\mathbb{E_{\rv{w_2}}}\{\rv r_2(t)\}|^2 {d}t\Big]\Big\}\\
&=\mathbb{E}_{\rv w_1}\Big\{\frac{1}{\sqrt{2\pi\sigma^2}}\exp\Big[-\frac{1}{2\sigma^2}\int|r_2(t)-\frac{\alpha^m_2}{\alpha^m_1}
\\&\qquad\qquad\qquad \left( r_1(t-\Delta\tau^m_{2,1})-\rv w_1(t-\Delta\tau^m_{2,1})\right)|^2{d}t\Big]\Big\}.
\end{aligned}
\end{equation}
Expanding the integral in the last expression gives the sum of the following four terms:
\begin{flalign}
%\begin{aligned}
&\int|r_2(t)|^2{d}t+\int\left|\frac{\alpha_2}{\alpha_1}r_1(t-\Delta\tau^m_{2,1})\right|^2{d}t\nonumber\\&\qquad\qquad =\underbrace{\int|r_2(t)|^2{d}t+\int\left|\frac{\alpha^m_2}{\alpha^m_1}{r}_1(t)\right|^2{d}t}_{\circledone},
%\end{aligned}
\end{flalign}
\begin{flalign}
	&\underbrace{\int- 2\,\text{Re}\Big\{\frac{\alpha^m_2}{\alpha^m_1}r_1(t-\Delta\tau^m_{2,1})r^{*}_{2}(t)\Big\}\,{d}t}_{\circledtwo},
\end{flalign}
\begin{flalign}
&\int 2\,\text{Re}\Big\{\frac{\alpha^m_2}{\alpha^m_1}\rv w_1(t-\Delta\tau^m_{2,1})\,\Big[
r_2(t)-\frac{\alpha^m_2}{\alpha^m_1}\rv{r}_1(t)
\Big]^{*}\Big\}\,{d}t=\nonumber
\\&\underbrace{\int 2\,\text{Re}\Big\{\frac{\alpha^m_2}{\alpha^m_1} \rv w_1(t-\Delta\tau^m_{2,1})\Big[w_2(t)-\frac{\alpha^m_2}{\alpha^m_1}\rv w_1(t-\Delta\tau^m_{2,1})\Big]^{*}\Big\}}_{\circledthree}
\end{flalign}

\begin{flalign}\label{eq:four}
	\text{and}&\int\left|\frac{\alpha^m_2}{\alpha^m_1}\rv w_1(t-\Delta\tau^m_{2,1})\right|^2{d}t=\underbrace{\int\left|\frac{\alpha^m_2}{\alpha^m_1}\rv w_1(t)\right|^2{d}t}_{\circledfour}.
\end{flalign}

The terms $\circledfour$ does not depend on $\Delta_{2,1}^m$. Therefore, it can be omitted later on from the detector expression. Moreover, terms $\circledone$ and $\circledtwo$ are independent of $\circledthree$ and $\circledfour$, and this will be of later use in the derivation of (\ref{eq:MLfinal}). The fact that $\{\rv w_1(t),t\in \mathbb{R}\}$ is stationary process gives

 \[ \begin{aligned} &\mathbb{E}_{\rv w_1}\left\{\frac{1}{\sqrt{2\pi\sigma^2}}\exp\Big[-\frac{1}{2\sigma^2} \circledthree \Big]\right\}
 \\
 	&=\mathbb{E}_{\rv w_1}\Bigg\{\frac{1}{\sqrt{2\pi\sigma^2}}\exp\Big[-\frac{1}{2\sigma^2}\int 	2\text{Re}\Big\{\frac{\alpha^m_2}{\alpha^m_1} \rv w_1(t-\Delta\tau^m_{2,1})\\
  &\qquad\qquad\qquad\qquad\Big(w_2(t)-\frac{\alpha^m_2}{\alpha^m_1}\rv w_1(t-\Delta\tau^m_{2,1})\Big)^{*}\Big\} \,{d}t\Big]\Bigg\}
 \\
 		&=\mathbb{E}_{\rv w_1}\Bigg\{\frac{1}{\sqrt{2\pi\sigma^2}}\exp\Big[-\frac{1}{2\sigma^2}\int 2\text{Re}\Big\{\frac{\alpha^m_2}{\alpha^m_1} \rv w_1(t)\\
   &\qquad\qquad\qquad\Big(w_2(t)-\frac{\alpha^m_2}{\alpha^m_1}\rv w_1(t)\Big)^{*}\Big\}\,dt\Big]\Bigg\},
 \end{aligned}\]

which is independent of $\Delta_{2,1}^m$ and hence can be omitted from the detector expression. 
% Another key observation helps in concluding the analysis is that, for any function $g(.)$, we have
%
%\begin{equation}\label{eq:ExpectIndependent}
%\begin{aligned}
%&\mathbb{E}_{\rv w_1}\Big\{\frac{1}{\sqrt{2\pi\sigma^2}}\exp\Big[-\frac{1}{2\sigma^2}\int g\Big(\rv w_1(t-\Delta\tau^m_{2,1})\Big){d}t\Big]\Big\}\\
%&\coloneqq\mathbb{E}_{\rv w_1}\Big\{\frac{1}{\sqrt{2\pi\sigma^2}}\exp\Big[-\frac{1}{2\sigma^2}\int g\Big(\rv w_1(t+\tau^m_1-\tau^m_2)\Big){d}t\Big]\Big\}\\
%&=\mathbb{E}_{\rv w_1}\Big\{\frac{1}{\sqrt{2\pi\sigma^2}}\exp\Big[-\frac{1}{2\sigma^2}\int g\Big(\rv w_1(t)\Big){d}t\Big]\Big\}.
%\end{aligned}	
%\end{equation}
% where \[g\Big(\rv w_1(t-\Delta\tau^m_{2,1})\Big)=2\text{Re}\Big\{\frac{\alpha_2}{\alpha_1} w_1(t-\Delta\tau^m_{2,1})\Big[w_2(t)-\frac{\alpha_2}{\alpha_1}w_1(t-\Delta\tau^m_{2,1})\Big]^{*}\Big\}\]. The latter is shown to be equal to $$\mathbb{E_{\rv w_1}}\Big\{\frac{1}{\sqrt{2\pi\sigma^2}}\exp\Big[-\frac{1}{2\sigma^2}\int g\Big(w_1(t)\Big){d}t\Big]\Big\}$$ and hence independent of ${m}$, according to the result in (\ref{eq:ExpectIndependent}). Therefore, $\circledthree$
%becomes independent from of index ${m}$, once it is incorporated in (\ref{eq:likelihood}). Consequently, it can be omitted from the detector expression.

%where we principally average over all the value of $\rv w_1$. In fact, h case, The function $ g\Big(w_1(t)\Big)=2\text{Re}\Big\{\frac{\alpha_2}{\alpha_1} w_1(t-\Delta\tau^m_{2,1})\Big[w_2(t)-\frac{\alpha_2}{\alpha_1}w_1(t-\Delta\tau^m_{2,1})\Big]^{*}\Big\}$
%\newgeometry{top=0cm}
\par Considering the above results regarding $\circledthree$ and $\circledfour$ and incorporating the expressions of the four terms in (\ref{eq:likelihood}) and  (\ref{eq:MLanalysis}) gives
\begin{align}\label{eq:MLfinal}
&\underset{{m}\in\{{1} \dotsc {M}\}}{\operatorname{argmax}}\log\left\{f_{\rv{r}_2|\RV{\Delta}\RV{\tau}, \rv r_{1}}(r_{n}|\V{\Delta}\V{\tau}^{m}_{2,1},\V{r}_{1})\right\}\nonumber\\
&\qquad\qquad+\sum_{n=3}^{N}\log\left\{f_{\rv{r}_n|\RV{\Delta}\RV{\tau}, \rv r_{1\to n-1}}(r_{n}|\V{\Delta}\V{\tau}^{m}_{n,1\to n-1},\V{r}_{1\to n-1})\right\}\nonumber\\
&=\underset{{m}\in\{{1} \dotsc {M}\}}{\operatorname{argmax}}\log\mathbb{E}_{\rv w_1}\Big\{\exp\Big[-\frac{1}{2\sigma^2}(\circledone+\circledtwo+\circledthree+\circledfour)\Big]\Big\}\nonumber\\
&\qquad\qquad+\sum_{n=3}^{N}\log\left\{f_{\rv{r}_n|\RV{\Delta}\RV{\tau}, \rv r_{1\to n-1}}(r_{n}|\V{\Delta}\V{\tau}^{m}_{n,1\to n-1},\V{r}_{1\to n-1})\right\}\nonumber\\
&\overset{(c)}{=}\underset{{m}\in\{{1} \dotsc {M}\}}{\operatorname{argmax}}\log\mathbb{E}_{\rv w_1}\Big\{\exp\Big[-\frac{1}{2\sigma^2}(\circledone+\circledtwo)\Big]\Big\}\nonumber\\
&\qquad\qquad+\log\underbracee{\independent {m}}{\mathbb{E}_{\rv w_1}\Big\{\exp \Big[-\frac{1}{2\sigma^2}(\circledthree+\circledfour)\Big]\Big\}}\nonumber\\&
\qquad\qquad+\sum_{n=3}^{N}\log\left\{f_{\rv{r}_n|\RV{\Delta}\RV{\tau}, \rv r_{1\to n-1}}(r_{n}|\V{\Delta}\V{\tau}^{m}_{n,1\to n-1},\V{r}_{1\to n-1})\right\}\nonumber\\
& \overset{(d)}{=}\underset{{m}\in\{{1} \dotsc {M}\}}{\operatorname{argmax}}\log\Big\{\exp\Big[-\frac{1}{2\sigma^2}(\circledone+\circledtwo)\Big]\Big\}\nonumber\\
&\qquad\qquad+\sum_{n=3}^{N}\log\left\{f_{\rv{r}_n|\RV{\Delta}\RV{\tau}, \rv r_{1\to n-1}}(r_{n}|\V{\Delta}\V{\tau}^{m}_{n,1\to n-1},\V{r}_{1\to n-1})\right\}\nonumber\\&
=\underset{{m}\in\{{1} \dotsc {M}\}}{\operatorname{argmax}}-\frac{1}{2\sigma^2}\Bigg\{\underbracee{\independent{m}}{\int|r_2(t)|^2{d}t}+\underbracee{\independent{m}}{\int\left|\frac{\alpha^m_2}{\alpha^m_1}\rv{r}_1(t)\right|^2{d}t}\nonumber\\
&\qquad\qquad -\int 2\,\text{Re}\Big\{\frac{\alpha^m_2}{\alpha^m_1}r_1(t-\Delta\tau^m_{2,1})r^{*}_{2}(t)\Big\}\,{d}t\Bigg\}\nonumber\\&
\qquad\qquad+\sum_{n=3}^{N}\log\left\{f_{\rv{r}_n|\RV{\Delta}\RV{\tau}, \rv r_{1\to n-1}}(r_{n}|\V{\Delta}\V{\tau}^{m}_{n,1\to n-1},\V{r}_{1\to n-1})\right\}\nonumber\\
&=\underset{{m}\in\{{1} \dotsc {M}\}}{\operatorname{argmax}}\frac{1}{\sigma^2}\int\text{Re}\Big\{\frac{\alpha^m_2}{\alpha^m_1}r_1(t-\Delta\tau^m_{2,1})r^{*}_{2}(t)\Big\}\,{d}t
\nonumber
\\ 
&\qquad +\sum_{n=3}^{N}\log\left\{f_{\rv{r}_n|\RV{\Delta}\RV{\tau}, \rv r_{1\to n-1}}(r_{n}|\V{\Delta}\V{\tau}^{m}_{n,1\to n-1},\V{r}_{1\to n-1})\right\}
%\end{aligned}
\end{align}

%\par We showed above that the first term in the detector expression can be reduced to a cross-correlation function with a time shift compensation.
The equality in $(c)$ stems from the fact that $\circledone$ and $\circledtwo$ are independent from $\circledthree$ and $\circledfour$. Moreover, the passage $(d)$ is the results of $\circledthree$ and $\circledfour$ being independent from $\Delta_{2,1}^m$ post applying the expectation operator. In the next, we extend our proof for the remaining terms to have a closed-form expression of the detector. Let consider the $({n-1}){\text{th}}$ likelihood term,  
$$\log\left\{f_{\rv{r}_n|\RV{\Delta}\RV{\tau}, \rv r_{1\to n-1}}(r_{n}|\V{\Delta}\V{\tau}^{m}_{n,1\to n-1},\V{r}_{1\to n-1})\right\}.$$ For the next, we proceed in the same way as the derivation of the first term. We hence start by writing the mean of $\rv r_{n}$ as a function of the other received signals in $\RV r_{1\to n}$ and TDoAs vector $\V\Delta\V \tau^m_{n, 1\to n-1}$. By analogy to the expression in (\ref{eq:mean}), we have
\begin{equation}\label{eq:meani}
	\begin{aligned}
		\mathbb{E}_{\rv w_n}\{\rv r_n(t)\}&=\alpha^m_n s(t-t_0-\tau^m_n)\\
&
  =\frac{1}{n-1}\sum_{i=1}^{n-1}\frac{\alpha^m_n}{\alpha^m_i}\Big\{\rv r_{i}(t-\Delta\tau^m_{n,i})-\rv w_{i}(t-\Delta\tau^m_{n,i})\Big\}.
	\end{aligned}
\end{equation}
The above expression has similar form to the one in (\ref{eq:mean}). In fact, it can be obtained by substituting in (\ref{eq:mean}) the terms $\frac{\alpha^m_2}{\alpha^m_1}\rv{r}_1(t-\Delta\tau^m_{2,1})$ and $\rv w_1(t-\Delta\tau^m_{2,1})$, respectively, by $\frac{1}{n-1}\sum_{i=1}^{n-1}\frac{\alpha^m_i}{\alpha^m_j}\rv r_{i}(t-\Delta\tau^m_{n,i})$ and $\frac{1}{n-1}\sum_{i=1}^{n-1}\rv w_{i}(t-\Delta\tau^m_{n,i})$. The same analogy holds for the rest of the derivations, starting from the expression (\ref{eq:mean}) up to the expression (\ref{eq:MLfinal}). Following the exact same steps for each of the terms in the detector expression, we obtain a closed-form expression of the detector:%, which can be written as follows.
\begin{equation}
\begin{aligned}
\tilde{m}=&\underset{{m}\in\{{1} \dotsc {M}\}}{\operatorname{argmax}}\sum_{n=2}^{N}\frac{1}{n-1}\sum_{i=1}^{n-1}\text{Re}\Big\{\int
\frac{\alpha^m_n}{\alpha^m_i}r_{n}(t-\Delta\tau^m_{n,i})r^{*}_{i}(t)\,{d}t\Big\}.
\end{aligned}
\end{equation}

This ends the proof of Lemma \ref{Lem:DetectorTDoA}.

\section{Proof Of Theorem\ref{Lem:PositionCoherenceTime}}\label{Proof:PositionCoherenceTime}
 The difference of TDoAs can be written as
\begin{equation}
	\begin{aligned}
&\tau_2-\tau_1-(\rv\tau'_2-\rv\tau'_1)\\&=\frac{1}{c}\Big(l_2\cos[\theta-\phi_2]-l_1\cos[\theta-\phi_1]\\
&\qquad\qquad
-(l_2\cos[\rv\theta'-\phi_2]-l_1\cos[\rv\theta'-\phi_1])\Big)\\
	&=\frac{1}{c}\Big(l_2\cos[\theta-\phi_2]-l_1\cos[\theta-\phi_1]\\
&\qquad\qquad-(l_2\cos[\theta-\rv \theta_{\mathrm e}-\phi_2]-l_1\cos[\theta-\rv \theta_{\mathrm e}-\phi_1])\Big).
	\end{aligned}
\end{equation}
We have $$\cos[\theta-\rv \theta_{\mathrm e}-\phi_1]=\cos[\rv \theta_{\mathrm e}]\cos[\theta-\phi_1]+\sin[\rv \theta_{\mathrm e}]\sin[\theta-\phi_1].$$
Incorporating this equality in the expression of the TDoAs-difference gives
\begin{equation}
	\begin{aligned}
		&\tau_2-\tau_1-(\rv\tau'_2-\rv\tau'_1)=\frac{1}{c}\Big(q_1-q_1\cos[\rv \theta_{\mathrm e}]-q_2\sin[\rv \theta_{\mathrm e}]\Big),
	\end{aligned}
\end{equation}
\vspace{0.2cm}
\\
where $$\begin{cases} q_1=l_2\cos[\theta-\phi_2]-l_1\cos[\theta-\phi_1]\\ q_2=l_2\sin[\theta-\phi_2]-l_1\sin[\theta-\phi_1]\end{cases}.$$
\par To derive the expression of the coherence function, we need to integrate over all values of $\rv\varphi$. Meanwhile, $\rv \Delta\rv \tau$ is a function of $\rv \theta_{\mathrm e}$. In the next, we express $\rv \theta_{\mathrm e}$ as a function of $\rv \varphi$ and then perform an integration by substitution. From Fig. \ref{Fig:CoherenceTime}, we have
$$\rv d'\exp[\jmath\rv\theta']=d\exp[\jmath\theta]+t_{\mathrm c}v\exp[\jmath\rv\varphi].$$ Multiplying each of the equality sides by the factor $\exp[-\jmath\rv\theta']$ gives
$$\rv d'=d\exp[\jmath\rv \theta_{\mathrm e}]+t_{\mathrm c}v\exp[\jmath(\rv\varphi-\rv\theta')].$$
%%(denoted by, $\text{Im}(d')$)
A key observation is that the imaginary part of $\rv d'$ is equal to zero and, hence, is the imaginary part of the right side of the equation. This implies that
$d\sin[\rv \theta_{\mathrm e}]+t_{\mathrm c}v\sin[\rv \varphi-\underbrace{\rv\theta'}_{\theta-\rv \theta_{\mathrm e}}]=0$, which yields
%With some mathematical operations, the equality gives
$$\varphi=\arcsin\left(-\frac{d}{t_{\mathrm c}v}\sin[\rv \theta_{\mathrm e}]\right)-\rv \theta_{\mathrm e}+\theta.$$
Moreover, we have
\begin{equation}
	\begin{aligned}
		f_{\rv \theta_{\mathrm e}}(\theta_{\mathrm e})&=f_{\rv \varphi}(\varphi)\left|\frac{\partial \varphi}{\partial \theta_{\mathrm e}}\right|_{\varphi\to \theta_{\mathrm e}}\\
  &=f_{\rv \varphi}(\varphi)\left(\frac{\frac{d}{t_{\mathrm c}v}\cos[\theta_{\mathrm e}]}{\sqrt{1-\frac{d^2}{(t_{\mathrm c}v)^2}\sin^2[\theta_{\mathrm e}]}}+1\right)_{\varphi\to \theta_{\mathrm e}}\\&
  =\frac{1}{2\pi}\left(\frac{\frac{d}{t_{\mathrm c}v}\cos[\theta_{\mathrm e}]}{\sqrt{1-\frac{d^2}{(t_{\mathrm c}v)^2}\sin^2[\theta_{\mathrm e}]}}+1\right).
	\end{aligned}
\end{equation}
$\rv \varphi$ is in the range $[0,2\pi]$ implying that $-\frac{d}{t_{\mathrm c}v}\sin(\rv \theta_{\mathrm e})$ is in the range of $[-1,1]$ and then  $$\begin{aligned} &\rv \theta_{\mathrm e}\in\left[-\arcsin\left(\frac{t_{\mathrm c}v}{d}\right),\arcsin\left(\frac{t_{\mathrm c}v}{d}\right)\right] \\&\qquad\qquad\cup \left[\pi-\arcsin\left(\frac{t_{\mathrm c}v}{d}\right),\pi+\arcsin\left(\frac{t_{\mathrm c}v}{d}\right)\right]\end{aligned}.$$
The MD displacement distance is a fraction of the BS-MD distance, which gives $\rv \theta_{\mathrm e}\in[-\pi/4,\pi/4]$ and $\arcsin(t_{\mathrm c}v/d)\in [-\pi/4,\pi/4]$. These imply $$\rv \theta_{\mathrm e}\notin \left[\pi-\arcsin\left(\frac{t_{\mathrm c}v}{d}\right),\pi+\arcsin\left(\frac{t_{\mathrm c}v}{d}\right)\right].$$ Knowing that $\rv \theta_{\mathrm e}$ is only in the range of the left side of the union, the probability of $\rv \theta_{\mathrm e}$ becomes
\begin{equation}
	\begin{aligned}
		f_{\rv \theta_{\mathrm e}}(\theta_{\mathrm e})
		&=\frac{1}{\pi}\left(\frac{\frac{d}{t_{\mathrm c}v}\cos[\theta_{\mathrm e}]}{\sqrt{1-\frac{d^2}{(t_{\mathrm c}v)^2}\sin^2[\theta_{\mathrm e}]}}+1\right).
	\end{aligned}
\end{equation}
\par Owing the expression of $f_{\rv \theta_{\mathrm e}}(\theta_{\mathrm e})$, we can now derive an expression of the $J_{\mathrm{DCT}}(t_{\mathrm c})$ as follows.
\begin{equation}
	\begin{aligned}
	&J_{\text{DCT}}(t_{\mathrm c})\\&=\frac{1}{B}\left|\int_{0}^{2\pi}\int_{-\frac{ B}{2}}^{\frac{ B}{2}}f_{\rv \varphi}(\varphi)\exp[\jmath 2\pi f\left(\tau_2-\tau_{1}-(\rv\tau'_2-\rv\tau'_{1})\right)]\,{d}f{d}\varphi\right|
	\\&=\frac{1}{B}\Bigg|\int_{-\arcsin\left(\frac{t_{\mathrm c}v}{d}\right)}^{\arcsin\left(\frac{t_{\mathrm c}v}{d}\right)}\int_{-\frac{ B}{2}}^{\frac{ B}{2}}f_{\rv \theta_{\mathrm e}}(\theta_{\mathrm e})\\&\qquad\qquad \exp\Big[\jmath 2\pi \frac{f}{c} \Big(q_1-q_1\cos[\theta_{\mathrm e}]-q_2\sin[\theta_{\mathrm e}]\Big)\Big]{d}f{d}\theta_{\mathrm e}\Bigg|
	\\&=\frac{1}{B\pi}\Bigg|\int_{-\arcsin\left(\frac{t_{\mathrm c}v}{d}\right)}^{\arcsin\left(\frac{t_{\mathrm c}v}{d}\right)}\int_{-\frac{ B}{2}}^{\frac{ B}{2}}\left(\frac{\frac{d}{t_{\mathrm c}v}\cos[\theta_{\mathrm e}]}{\sqrt{1-\frac{d^2}{(t_{\mathrm c}v)^2}\sin^2[\theta_{\mathrm e}]}}+1\right)\\
 &\qquad\qquad\exp\Big[\jmath 2\pi \frac{f}{c} \Big(q_1-q_1\cos[\theta_{\mathrm e}]-q_2\sin[\theta_{\mathrm e}]\Big)\Big]{d}f{d}\theta_{\mathrm e}\Bigg|
		\\&=\frac{B}{B\pi}\Bigg|\int_{-\arcsin\left(\frac{t_{\mathrm c}v}{d}\right)}^{\arcsin\left(\frac{t_{\mathrm c}v}{d}\right)}\left(\frac{\frac{d}{t_{\mathrm c}v}\cos[\theta_{\mathrm e}]}{\sqrt{1-\frac{d^2}{(t_{\mathrm c}v)^2}\sin^2[\theta_{\mathrm e}]}}+1\right)\\
  &\qquad\qquad\sinc\Big[\pi \frac{ B}{c} \Big(q_1-q_1\cos[\theta_{\mathrm e}]-q_2\sin[\theta_{\mathrm e}]\Big)\Big]{d}\theta_{\mathrm e}\Bigg|.
  \end{aligned}
\end{equation}
%\newgeometry{left=0.625in,right=0.625in,top=0.78in,bottom=0.9in}

Let us consider the random variable $\rv z=\sin(\rv \theta_{\mathrm e})$, for $\rv \theta_{\mathrm e}\in \left[-\arcsin\left(\frac{t_{\mathrm c}v}{d}\right),\arcsin\left(\frac{t_{\mathrm c}v}{d}\right)\right]\subset[-\pi/4,\pi/4].$ In this angle range, we have $\cos[\rv \theta_{\mathrm e}]=\sqrt{1-\sin^2[\rv \theta_{\mathrm e}]}=\sqrt{1-\rv z^2}.$
Substituting the variable $\sin[\rv \theta_{\mathrm e}]$ for $z$ in the expression of the coherence time function gives
\begin{equation}
	\begin{aligned}
J_{\text{DCT}}(t_{\mathrm c})&=\frac{1}{\pi}\int_{-\frac{t_{\mathrm c}v}{d}}^{\frac{t_{\mathrm c}v}{d}}\left(\frac{\frac{d}{t_{\mathrm c}v}\sqrt{1-z^2}}{\sqrt{1-\frac{d^2}{(t_{\mathrm c}v)^2}z^2}}+1\right)\\&
\qquad\sinc\Big[\pi \frac{ B}{c} \Big(q_1-q_1\sqrt{1-z^2}-q_2z\Big)\Big]\frac{1}{\frac{\partial \sin[\theta_{\mathrm e}]}{\partial \theta_{\mathrm e}}}\,{d}z
\\&=\frac{1}{\pi}\int_{-\frac{t_{\mathrm c}v}{d}}^{\frac{t_{\mathrm c}v}{d}}\left(\frac{\frac{d}{t_{\mathrm c}v}\sqrt{1-z^2}}{\sqrt{1-\frac{d^2}{(t_{\mathrm c}v)^2}z^2}}+1\right)\\
&\qquad
\sinc\Big[\pi \frac{ B}{c} \Big(q_1-q_1\sqrt{1-z^2}-q_2z\Big)\Big]\frac{1}{\cos[\theta_{\mathrm e}]}\,{d}z
\\&=\frac{1}{\pi}\int_{-\frac{t_{\mathrm c}v}{d}}^{\frac{t_{\mathrm c}v}{d}}\left(\frac{\frac{d}{t_{\mathrm c}v}\sqrt{1-z^2}}{\sqrt{1-\frac{d^2}{(t_{\mathrm c}v)^2}z^2}}+1\right)\\
&\qquad\sinc\Big[\pi \frac{ B}{c} \Big(q_1-q_1\sqrt{1-z^2}-q_2z\Big)\Big]\frac{1}{\sqrt{1-z^2}}\,{d}z
%\\&=\frac{1}{\pi}\int_{-\frac{t_{\mathrm c}v}{d}}^{\frac{t_{\mathrm c}v}{d}}\left(\frac{\frac{d}{t_{\mathrm c}v}}{\sqrt{1-\frac{d^2}{(t_{\mathrm c}v)^2}z^2}}+\frac{1}{\sqrt{1-z^2}}\right)\\&\,\,\,\,\,\,\,\,\,\,\,\,\,\,\,\,\,\,\,\,\,\,\,\,\,\,\,\,\,\,\,\sinc\Big[\pi \frac{ B}{c} \Big(q_1-q_1\sqrt{1-z^2}-q_2z\Big)\Big]{d}z
\\&=\frac{1}{\pi}\int_{-g_{\max}}^{g_{\max}}\left(\frac{1}{g_{\max}\sqrt{1-\frac{1}{g^2_{\max}}z^2}}+\frac{1}{\sqrt{1-z^2}}\right)\\
&\qquad\sinc\Big[\pi \frac{ B}{c} \Big(q_1-q_1\sqrt{1-z^2}-q_2z\Big)\Big]\,{d}z,
 \end{aligned}
\end{equation}
where $g_{\max}=\frac{t_{\mathrm c}v}{d}$.
This ends the proof of Theorem \ref{Lem:PositionCoherenceTime}.

\section{Proof Of Lemma \ref{Lem:LowerBoundJDCT}}\label{Proof:lemJDCTLowerBound}

For $l_1=l_2=l$ we have \begin{equation}\vspace{-0.1cm}
	\begin{cases}
		-2l\leq q_2,q_1\leq 2l\\
		|1-z-\sqrt{1-z^2}|\leq 2|z|\,\, for |z|\leq 1
	\end{cases}. \end{equation}

Recall that $|z|\leq g_{\max}<1$ and hence
$$\pi \frac{ B}{c} |q_1(1-\sqrt{1-z^2})-q_2z|\leq 4\pi l \frac{ B}{c}|z|.$$
Accordingly, we get:
\begin{subequations}\label{eq:LowerBoundPositionCoherenceFunction}
\begin{align}
		&J_{\mathrm{DCT}}(t_{\mathrm c})\nonumber
  \\&\geq\frac{1}{\pi}\left|\int_{-g_{\max}}^{g_{\max}}\left(\frac{1}{g_{\max}\sqrt{1-\frac{z^2}{g^2_{\max}}}}+\frac{1}{\sqrt{1-z^2}}\right)\sinc[4\pi\frac{B}{c}z]{d}z\right|\\
		&=\frac{1}{\pi}\Bigg|\int_{-\frac{lB}{c}g_{\max}}^{\frac{lB}{c}g_{\max}}\Bigg(\frac{1}{\frac{lB}{c}g_{\max}\sqrt{1-\frac{1}{\left(\frac{lB}{c}g_{\max}\right)^2}z^2}}\nonumber\\
&\qquad\qquad\qquad+\frac{1}{\frac{lB}{c}\sqrt{1-\left(\frac{c}{lB}z\right)^2}}\Bigg)\sinc[4\pi z]\,{d}z\Bigg|\label{eq:LBoundDCTSecondLine}\\
		&\simeq \frac{1}{\pi}\Bigg|\int_{-\frac{lB}{c}g_{\max}}^{\frac{lB}{c}g_{\max}}\Bigg(\frac{1}{\frac{lB}{c}g_{\max}\sqrt{1-\frac{1}{\left(\frac{lB}{c}g_{\max}\right)^2}z^2}}\nonumber\\
&\qquad\qquad\qquad+\frac{1}{\frac{lB}{c}\sqrt{1-\left(\frac{c}{lB}z\right)^2}}\Bigg) \cos[2\pi z]\,{d}z\Bigg|\\
		&\geq \frac{1}{\pi}\Bigg|\int_{-\frac{lB}{c}g_{\max}}^{\frac{lB}{c}g_{\max}}\Big(\frac{1}{\frac{lB}{c}g_{\max}\sqrt{1-\frac{1}{\left(\frac{lB}{c}g_{\max}\right)^2}z^2}}\Big)\cos[2\pi z]\,{d}z\Bigg|\\&
		=\left|J_{0}\left(2\pi \frac{lB}{c} g_{\max}\right)\right|,
\end{align}
\end{subequations}
where (\ref{eq:LBoundDCTSecondLine}) is the result of change of variables, in which we substitute $z$ by $\frac{lB}{c}z$. 

\bibliographystyle{IEEEtran}
\bibliography{IEEEabrv,My-Temp}

\end{document}